\documentclass[sigconf,nonacm]{acmart}

\renewcommand\footnotetextcopyrightpermission[1]{}
\usepackage{amsmath}
\usepackage{graphicx}
\usepackage{tabularx}
\usepackage{booktabs}
\usepackage{balance}
\usepackage{amsthm}
\usepackage[capitalize,noabbrev]{cleveref}
\usepackage{relsize}
\usepackage{dblfloatfix}
\usepackage{tikz}
\usepackage{pgfplots}

\newtheorem{theorem}{Theorem}
\newtheorem{definition}{Definition}
\newtheorem{lemma}{Lemma}
\newtheorem{proposition}{Proposition}
\newtheorem{corollary}{Corollary}

\begin{document}

\title{Thermodynamic Human-Computer Interaction}

\author{Uzafir Ahmad Rafaq}
\affiliation{
  \institution{Heriot-Watt University}
  \country{U.A.E}}
\email{uzafir525@gmail.com}

\author{Muaz Hassan}
\affiliation{
  \institution{Independent Researcher}
  \country{Ireland}
}
\email{syedmuazhassan@gmail.com}

\author{Ali Muzaffar}
\affiliation{
  \institution{Heriot-Watt University}
  \country{U.A.E}}
\email{a.muzaffar@hw.ac.uk}

\begin{teaserfigure}
    \centering
    \includegraphics[width=0.9\textwidth]{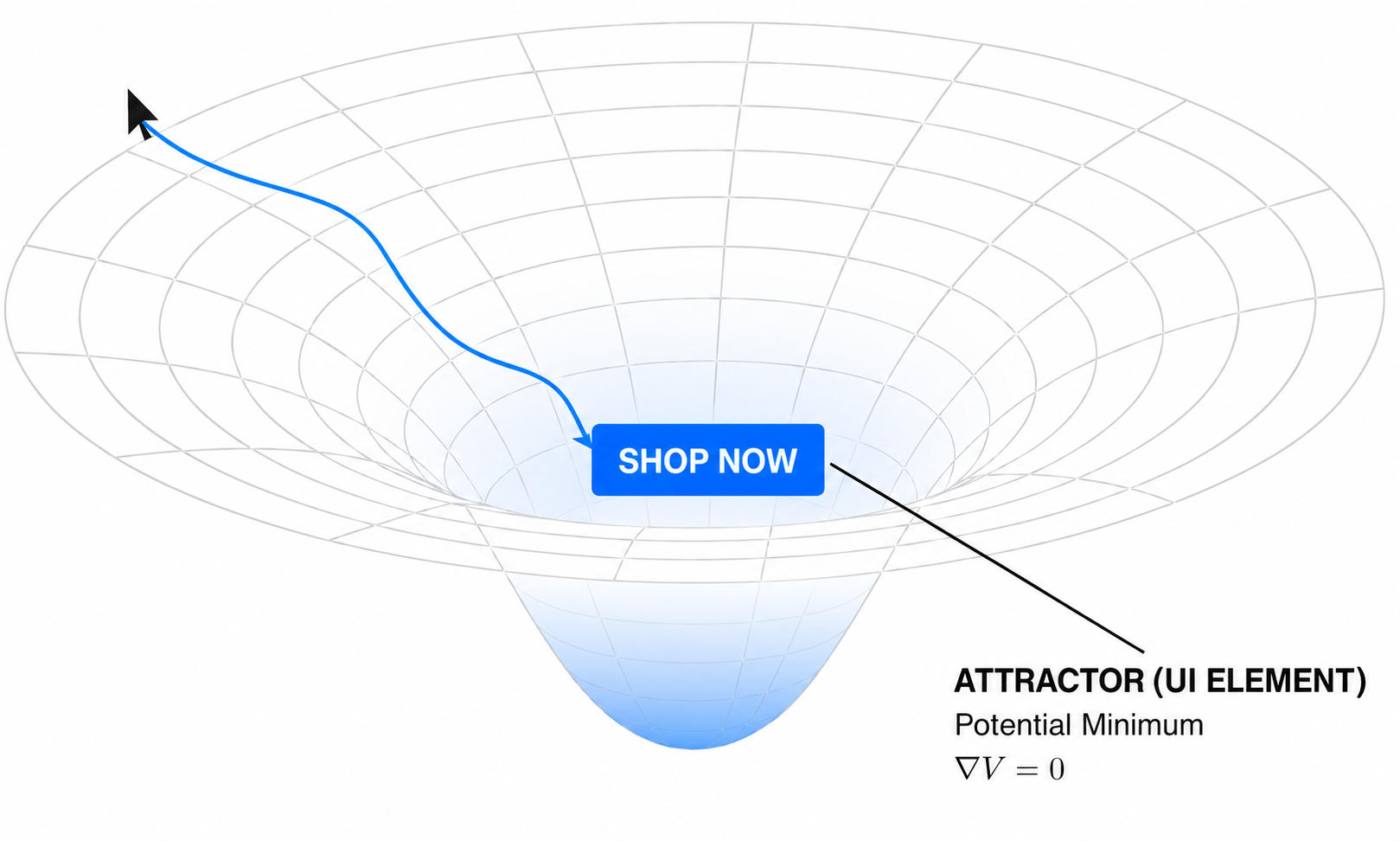}
    \caption{A digital target acts as a harmonic potential well. The cursor follows a noisy path towards the potential minimum ($\nabla V = 0$), while the color, text, and border radius of the button influence the attractive force.}
    \label{fig:teaser_potential_well}
\end{teaserfigure}

\begin{abstract}
Target acquisition is often modeled separately for desktop, mobile, and other interaction modalities. We present Thermodynamic HCI, a framework that splits interaction into thermal equilibrium and non-equilibrium regimes. The theory generalizes across interaction modalities by representing agent-target interaction using kinetic and potential energies. We derive the movement time of Fitts' law and the speed–accuracy tradeoff observed in Schmidt's law from the principles of thermal physics. Furthermore, we develop theorems that describe how target properties, such as the color of a button, affect user accuracy. The target acquisition model, derived from the theory, when evaluated on desktop and mobile website prefetching experiments, achieved an accuracy of 98\% for both cursor and touchscreen based interaction. For every clicked link, it produced a fetch:click ratio of 1.37 for desktop and 1.75 for mobile. \footnote{The target acquisition model is open source and available as the \texttt{intent-link} npmjs package at \href{https://www.npmjs.com/package/intent-link}{npmjs.com/package/intent-link}.}
\end{abstract}

\begin{CCSXML}
<ccs2012>
<concept>
<concept_id>10003120.10003121.10003122</concept_id>
<concept_desc>Human-centered computing~Interaction paradigms</concept_desc>
<concept_significance>500</concept_significance>
</concept>
<concept>
<concept_id>10003120.10003121.10003124</concept_id>
<concept_desc>Human-centered computing~Interaction techniques</concept_desc>
<concept_significance>500</concept_significance>
</concept>
</ccs2012>
\end{CCSXML}

\ccsdesc[500]{Human-centered computing~Interaction paradigms}
\ccsdesc[500]{Human-centered computing~Interaction techniques}

\keywords{Target and Intent prediction, Probabilistic Modeling, Spatial Interaction, Fitts' law, Statistical Mechanics, Thermodynamics}

\maketitle

\section{Introduction}
Accurate prediction of a user's intended target in an information system offers substantial benefits. Examples include making target icons sticky \cite{wordenMakingComputersEasier1997}, enlarging the target \cite{mcguffinAcquisitionExpandingTargets2002,mcguffinFittsLawExpanding2005}, or highlighting a likely target at an appropriate moment \cite{yuOptimizingTimingIntelligent2022}. 

Different interaction modalities require different approaches to predicting the user's target. VR systems, mobile devices, and desktop cursors each have their own methods for predicting the target. Most of these methods are designed and evaluated for a specific interaction modality: touch prediction models developed for phones do not transfer directly to desktops, while models based on cursor motion require adaptation for mobile devices. Various modeling options have previously been adopted. 

Early methods relying on kinematic and spatial heuristics \cite{lankEndpointPredictionUsing2007,asanoPredictiveInteractionUsing2005,ahmadYouNotHave2016,murataImprovementPointingTime1998} are computationally lightweight, but represent variability and noise in human motor behavior through limited, task specific assumptions. Models based on learning from data, including neural networks \cite{lePredicTouchSystemReduce2017,biswasIntentRecognitionUsing2013a}, can account for some of this variability by learning patterns from observed behavior. Reinforcement learning has also been used to decide whether a system should select a predicted target directly or present the user with alternatives \cite{liSelectSuggestReinforcement2022}. However, learned models require datasets, and their performance may not generalize across devices without additional data or retraining. In addition, their decisions are often less interpretable than those of kinematic models. The data requirements of neural networks have been partially addressed using biomechanical simulations \cite{moonRealtime3DTarget2024a}, which utilize a virtual human to perform tasks and gather data. However, biomechanical simulation approaches can require significant computational resources and task specific simulator design. Conversely, Optimal Control models \cite{ziebartProbabilisticPointingTarget2012,fischerOptimalFeedbackControl2022,klarSimulatingInteractionMovements2023,martinIntermittentControlModel2021} ground prediction in movement objectives and system dynamics. Although some support real time prediction, richer formulations can demand computationally intensive parameter fitting and numerical optimization, complicating deployment under tight computational budgets. 

Therefore, there remains a need for a unified target prediction model that combines the physical grounding and interpretability of analytical movement models with the computational efficiency of kinematic heuristics. To address the need, this study introduces a theory that supports target prediction across dimensions, for example, both 1D mobile scrolling and 2D cursor interaction and is lightweight and accurate enough to be used in production systems.

Human behavior is often described as favoring actions that require less effort \cite{zipfHumanBehaviorPrinciple2016}. This principle motivates a representation in which interaction states can be described using energy from physical quantities such as position and velocity. Information theory has strongly influenced mathematical models in human-computer interaction. Fitts framed aimed movement as a problem of information transmission and adapted concepts from Shannon's Information theory to relate movement time to target distance and width \cite{fittsInformationCapacityHuman1954,shannonMathematicalTheoryCommunication1948b}. Shannon entropy quantifies uncertainty over possible messages. Jaynes later showed that maximizing the same entropy produces the Gibbs distribution \cite{jaynesInformationTheoryStatistical1957}. This establishes a mathematical connection between information theory and equilibrium statistical mechanics. Subsequently, this relationship provides the motivation behind the theory proposed in this study to understand human-computer interaction through a thermodynamic lens. 

The proposed framework considers the entire interaction interval to be in regimes of thermal equilibrium and non-equilibrium. We assign a kinetic energy to the agent and target from their movement velocities. We also define a potential energy field based on the distance between the agent and the target. We model the stochastic movement of the agent towards the target by utilizing differential equations. We further demonstrate that the movement time in Fitts' Law emerges as a special case for an approach trajectory that resembles that of an underdamped harmonic oscillator.

From the theory, we create a target acquisition model that has a target prediction accuracy of 98\% and works in constant $\mathcal{O}(1)$ time. The model requires no learned training stage and is computationally efficient enough to run on resource constrained devices. To evaluate this model, we chose the application of target acquisition to be the prefetching of website links. For every clicked link, the model achieved a fetch:click ratio of 1.37 for cursor based interaction and 1.75 for touchscreen based interaction. 

We further develop the theory and introduce theorems that suggest each property of a target independently influences the endpoint inaccuracy of the user. For example, for a UI button on a website, its color, border radius and the clarity of its label may all have an effect on the accuracy of the user attempting to click it. We evaluate the effects of label clarity and color contrast on the accuracy of the user in our evaluation section. Additionally, a theorem is derived that describes the speed-accuracy tradeoff observed in HCI \cite{schmidtMotoroutputVariabilityTheory1979, zhaiSpeedAccuracyTradeoff2004, harrisSignaldependentNoiseDetermines1998, plamondonSpeedAccuracyTradeoffs1997}. Furthermore, we develop another theorem to distinguish between the hypothesized equilibrium and non-equilibrium states of the system.

\subsection{Aim, Objectives and Contributions}
\subsubsection{Aim of the Study}
The aim of this study is to develop and evaluate an interpretable and computationally lightweight theory of user interaction across modalities. The framework connects target prediction, movement time, endpoint accuracy, and interface properties within a common thermodynamic representation. 

\subsubsection{Objectives of the Study}
The objectives of this study are to:
\begin{enumerate}
\item Represent an interaction between an agent and a target using a thermodynamic analogy.

\item Formalize a probabilistic model to predict the agent's intended target.

\item Use differential equations to describe movement towards a target and determine how a Fitts shaped movement time can emerge from a damped approach.

\item Derive testable relationships between movement speed, endpoint accuracy, and target properties such as label clarity and color contrast.

\item Implement and evaluate the target acquisition model on a working web prefetching system for one dimensional mobile scrolling and two dimensional cursor movement.

\item Evaluate in a user study how the clarity of a button label and its color may impact the accuracy of the user.

\item Identify the limits of the target acquisition model and define a diagnostic for detecting when its assumptions are not satisfied.
\end{enumerate}

\subsubsection{Contributions of This Study}
This study makes the following contributions:
\begin{enumerate}
\item It provides a framework where candidate interactions are represented by potential and kinetic energy. Langevin dynamics describe the system evolution, and the Gibbs distribution describes the stable probability distribution approached by the system.

\item It introduces an interpretable, efficient target prediction model that requires no training data and scales across interaction modalities. 

\item It derives the movement time used in Fitts' law and the speed accuracy tradeoff observed in Schmidt's law.

\item It introduces a parameterized potential field in which interface properties, such as the color of a button, can change the strength of attraction towards an agent and affect the accuracy of the user.

\item It proposes a diagnostic for distinguishing the equilibrium and non-equilibrium assumptions of the framework. 

\item It provides a functional implementation of the theory by providing a highly efficient prefetching library for mobile and desktop websites through an npmjs package. This presents a practical use case for the industry.
\end{enumerate}

\section{Related Work}
Target prediction can be approached from different perspectives, such as with neural networks or control theory. The applications of successfully predicting a target are numerous and have a significant impact on user experience. 

\subsection{Applications of Target Prediction} 
Once the user's intent to interact with a specific target has been successfully inferred, numerous practical applications arise, such as
\begin{enumerate}
    \item Move candidate targets closer to the pointing location. \cite{PDFDragandPopDragandPick}
    
    \item Restricting pointing to selectable targets. \cite{guiardObjectPointingComplement2004}
    
    \item Sticky Icons. \cite{wordenMakingComputersEasier1997}

    \item Enlarge the candidate targets. \cite{mcguffinFittsLawExpanding2005, mcguffinAcquisitionExpandingTargets2002}

    \item Expand the cursor's interactive area. \cite{chapuisDynaSpotSpeeddependentArea2009, grossmanBubbleCursorEnhancing2005, mottBeatingBubbleUsing2014}

    \item Prefetching website links. \cite{nigamAnalysisMarkovModel2010a}
\end{enumerate}
The application of target prediction in this study is chosen as prefetching. The prefetching process involves a system that calls an API to fetch the contents of a page on a website before the user clicks on the link. This enhances the user experience by providing zero loading and instant navigation. This application is chosen because of its high impact on economics and productivity. Miller \cite{millerResponseTimeMancomputer1968} showed that a computer must respond in under 100 milliseconds to keep the user focused on their current task. If a system waits for a user to initiate a request before starting its computation, the system will feel slow. Small delays cause users to leave a website and decrease their overall engagement. Brutlag showed that adding just a few hundred milliseconds of delay significantly reduces user interactivity \cite{brutlagSpeedMattersGoogle}. Similarly, Kohavi demonstrated that minor increases in latency cause measurable financial loss and user frustration in commercial systems \cite{kohaviOnlineExperimentsLessons2007}. In modern web development, prefetching is often implemented using JavaScript libraries. For example, Quicklink \cite{GoogleChromeLabsQuicklink2026} is a tool that automatically prefetches the URLs of links that are currently visible on the user's screen. ForesightJS \cite{spaansSpaansbaForesightJS2026} takes a different approach by focusing on the trajectory of the cursor to predict which link is about to be clicked. 

\subsection{Spatial \& Kinematic Heuristics} 
Spatial heuristics involve tracking the position of an agent, such as detecting the pointing finger in mid-air \cite{ahmadYouNotHave2016} using motion sensors and applying a probabilistic nearest neighbor algorithm to detect the target which will be selected. However, this inevitably results in one of the selectable targets being ranked higher constantly, despite the user having no intention of interacting with it, and this approach does not incorporate kinematic data. Another spatial heuristic approach is to measure the angle between the velocity vector of the agent and the position vector of the target \cite{murataImprovementPointingTime1998}. Although angle based heuristics successfully incorporate kinematic data, they suffer from inaccuracies when attempting to make predictions in cluttered interfaces, where a lot of targets are in close proximity. Further work on kinematic and spatial heuristics involves tracking the peak movement velocity of the cursor and its direction \cite{asanoPredictiveInteractionUsing2005, lankEndpointPredictionUsing2007, fuYourMouseReveals2017a}. ForesightJS \cite{spaansSpaansbaForesightJS2026} is a JavaScript library which is built upon these kinematic and spatial heuristics. It involves projecting the cursor velocity and direction forward in time to predict the endpoint of interaction. All these kinematic and spatial heuristics help determine the endpoint of interaction but do not provide enough information on the click event itself. The Intermittent Click Planning model \cite{parkIntermittentClickPlanning2020a} improves on this weakness by providing insight on when the click event occurs. A shared problem among all these heuristics is that they fail to account for the variability in human behavior. If an attempt is made to predict the landing position of a cursor based purely on kinematic and spatial data, then the fact that a user could abort the process mid-way is not accounted for.

\subsection{Data-Driven and Learning Paradigms}

Data-driven and learning paradigms expand on purely spatial and kinetic heuristics by introducing techniques such as neural networks \cite{lePredicTouchSystemReduce2017, biswasIntentRecognitionUsing2013a} and reinforcement learning, including its use in biomechanical simulation \cite{liSelectSuggestReinforcement2022, 
oulasvirtaComputationalRationalityTheory2022, cheemaPredictingMidAirInteraction2020, chenAdaptiveModelGazebased2021, doSimulationModelIntermittently2021, moonRealtime3DTarget2024a, ikkalaBreathingLifeBiomechanical2022, miazgaLog2MotionBiomechanicalMotion2026}. Neural network based approaches work by learning from kinematic and spatial features, including the velocity of the agent, its distance from the target, the angle between the velocity vector of the agent and the position vector of the target, etc. This helps overcome the human usage variability issues associated with purely kinematic and spatial heuristics by accounting for usage patterns. However, they fail to generalize to different tasks without requiring retraining. In addition, training neural networks requires a lot of human data and is a computationally expensive task.
In contrast, reinforcement learning frames interaction as a sequential decision-making problem aimed at maximizing cumulative reward over time. These models are typically formalized using a Markov Decision Process (MDP), or a Partially Observable MDP (POMDP) when the agent only has partial sensory access to the environment. Both frameworks are defined by a tuple consisting of a state space representing the environment, a set of permissible actions, a transition function dictating the probability of moving between states, a reward signal, and a discount factor. Unlike neural network based approaches, reinforcement learning does not require pre-collected human data, since the agent instead learns by interacting with an environment.
Biomechanical simulation builds on this by pairing reinforcement learning with a physiologically grounded musculoskeletal model and physics engine, which together serve as the environment from which the agent learns. This addresses the data requirement and generalization issues of neural network based approaches, since the simulated human can generate unlimited training data across different tasks. Studies have shown that such simulations can closely replicate human performance metrics \cite{moonRealtime3DTarget2024a, miazgaLog2MotionBiomechanicalMotion2026}, with the simulated human naturally following Fitts' law \cite{fittsInformationCapacityHuman1954} and the speed-accuracy tradeoff \cite{zhaiSpeedAccuracyTradeoff2004} as an emergent property of the simulation without being hard-coded. However, the inter-user variability in these simulations is typically smaller than observed human variability. The simulations simplify real physiology, are sensitive to the choice of hyper-parameters, and carry a high computational cost to train and run.

\subsection{Control-Theoretic and Analytical Models}
Control-Theoretic and analytical models ground prediction in fundamental physics and motor-control theory rather than fitting parameters to human data. The simplest of these, the Minimum Jerk model \cite{flashCoordinationArmMovements1985}, assumes that human reaching movements minimize jerk (the rate of change of acceleration), producing smooth, bell-shaped velocity profiles without modeling feedback or noise. More sophisticated Optimal Control models frame movement as a closed-loop process in which a simulated user continuously corrects its actions based on noisy sensory feedback. Linear-Quadratic-Gaussian (LQG) control \cite{fischerOptimalFeedbackControl2022} models the body and interface as a single dynamical system with a quadratic cost function. It combines a Kalman filter to estimate velocity from noisy observations. However, LQG's reliance on linear dynamics and quadratic costs makes it unsuitable for realistic, nonlinear biomechanical models. Model Predictive Control (MPC) \cite{klarSimulatingInteractionMovements2023} addresses this by repeatedly solving a short-horizon optimization problem at each timestep rather than the entire movement at once, allowing nonlinear, physiologically grounded arm models to be used. Intermittent Control \cite{martinIntermittentControlModel2021} instead assumes that users only correct their movement intermittently, when the discrepancy between perceived and intended cursor position grows large enough, more closely matching the discrete submovement structure observed in human trajectories. Most recently, Active Inference (AIF) models \cite{klarActiveInferenceModel2025} have reframed the problem entirely: rather than hand-tuning a cost function, the agent acts to fulfill preferences while minimizing uncertainty. This approach unifies perception, belief updating, and action into a single probabilistic loop. Notably, AIF models reproduce qualitatively different pointing behavior for targets of differing difficulty using a single fixed parameter set, whereas the other Optimal Control approaches typically require separate retuning per target or user. While these models successfully ground prediction in fundamental physics and motor-control theory, they demand computationally intense optimization (e.g., solving thousands of sampled action rollouts per timestep in the case of AIF, or repeated short-horizon optimization in MPC) that is impractical for real-time deployment on low-end devices.

\section{Methodology}
To model digital interaction, it is necessary to define the boundaries of the system being observed. We apply the analogy that each target in a digital system is a gas particle. Since there may be many such particles on the screen, the boundaries must be drawn such that each cursor-target pair represents a unique system. See \cref{fig:microstate_visualization}. To derive the implications, it is necessary to list certain assumptions and definitions that will serve as a reference for the arguments in the subsequent sections.

\subsection{Assumptions and Definitions}
Gas particles have position and velocity. As a result of these properties, they have kinetic and potential energies. Since a moving cursor and target have velocity, we assign them a kinetic energy. We also assign a potential energy due to the distance between the position of the cursor and the target. This potential energy has the relationship
\begin{equation} \label{eq:force_potential}
    \mathbf{F}(\mathbf{x})=- \nabla V(\mathbf{x})
\end{equation}
An attractive force which pulls the agent towards the target is produced by the slope of the potential field. The total energy of the agent-target pair will be
$$\mathcal{H(\mathbf{v},\mathbf{x})}=T(\mathbf{v})+V(\mathbf{x}),$$
where $\mathcal{H}$ represents the Hamiltonian of the system, $T$ represents the total kinetic energy and $V$ represents the potential energy. Expressed in terms of the kinetic energies of a moving target $i$ and moving agent $a$, this formulation takes the form
$$\mathcal{H}(\mathbf{v}_a,\mathbf{v}_i,\mathbf{x})=T(\mathbf{v}_a)+T(\mathbf{v}_i)+V(\mathbf{x}).$$

\begin{definition} \label{definition:microstates}
\textbf{\textit{Microstates.}}
    The microstate of a system represents properties such as the position and velocity of every particle at a frozen moment in time. A Hamiltonian assigns energy to a microstate as a result of the velocity and position of the particles. By extension, a microstate in an information system may be represented by the position and velocities of the cursor and target.
\end{definition}

A successful interaction may occur when the system observes a microstate where an agent is at rest and its geometric boundary intersects the boundary of the target. Conversely, the cursor could still possess a velocity when the user clicks a button, but even then, intersecting the geometric boundary is required.

\begin{definition} \label{definition:mechanical_equilibrium}
\textbf{\textit{Mechanical Equilibrium.}}
    We measure the displacement between the agent and the target from their centers. At a point $\mathbf{x}=\mathbf{0}$, the centers of the agent and the target overlap. A system has reached mechanical equilibrium when the net force on an object is zero. If we hypothesize that there exists an attractive force between an agent and a target, then the center of the target may be the only point of reference where the net force is zero. Because if a target exerts an attractive force, then each pixel that composes the target may contribute to this attractive force. It is only at the center where these vectors would cancel out. However, it is important to acknowledge that this only happens for regularly shaped targets. But even then, as we will find out in the subsequent sections, pixels of different colors may exert different magnitudes of force. Having taken this ambiguity into account, we approximate the center of a target, $\mathbf{x}=\mathbf{0}$, to be the point of mechanical equilibrium.
\end{definition}

For the force $F$ to be zero, the change in the potential field, $\nabla V(\mathbf{x})$ must be zero. We assume that there is a potential minimum at $V(\mathbf{0})=0$ for the potential field around a target. Explicitly stated, the potential field has the properties \footnote{Note how we denote the output of the potential function without bold "0". The output of this function is an element of $\mathbb{R}$.},
\begin{equation*}
    V(\mathbf{0})=0 \quad\text{and}\quad\nabla V(\mathbf{0})=\mathbf{0},\quad \text{where} \quad
    V:\mathbb{R}^n\rightarrow\mathbb{R}.
\end{equation*}
In other words, $\mathbf{x}=\mathbf{0}$ represents the critical point of the potential function. Additionally, we assume that this function is continuous and differentiable everywhere. This assumption is motivated by the physical observation that the probability of interaction between a cursor and button varies continuously with distance without sudden jumps \footnote{\cref{fig:teaser_potential_well} visualizes the potential energy function.}. 

It follows that the distance vector to the boundaries of a regularly shaped target of width $W$, measured from the center $\mathbf{x}=\mathbf{0}$, can be represented as $\|\mathbf{x}\|=\frac{\mathbf{W}}{2}$. On the boundary, $V(\frac{W}{2})\ne0$ and $\nabla V(\frac{W}{2})\ne0$.

\begin{definition} \label{definition:thermal_equilibrium}
\textbf{\textit{Thermal Equilibrium.}}
    An observed system is in thermal equilibrium with another system, the heat bath, when they have the same average temperature and there is no net heat flow between the two systems. Equilibrium occurs when the systematic friction (damping) and the random thermal noise balance each other out.
\end{definition}

The core hypothesis is that an agent-target system subject to initial conditions may or may not be in thermal equilibrium initially, but might eventually relax towards thermal equilibrium over time. In the subsequent sections, we diagnose the relaxation towards equilibrium. By this analogy, the heat bath represents the random variability from user motor behavior, cognitive variability, device sensing, and measurement. The cursor and button system can be thought of as reaching thermal equilibrium when the random noise introduced by the user balances the energy lost through damping, or friction, as the cursor settles around the target.

\begin{table*}[t]
    \centering
    \begin{tabularx}{\textwidth}{llX}
        \toprule
        \textbf{Physics Concept} & \textbf{HCI Equivalent} & \textbf{Practical Meaning in this Model} \\
        \midrule
        Microstate & Frozen Frame & The exact position and velocity of the cursor and target at one specific millisecond. \\
        \addlinespace
        Thermal Energy ($k_BT$) & Observed Fluctuations & Controls how strongly differences in measurement influence the Gibbs weights. Larger values produce broader and less concentrated probabilities. \\
        \addlinespace
       Heat Bath & Interaction Environment & Represents the sources of variability, including motor noise, cognitive friction, and device effects. \\
        \addlinespace
        Harmonic Potential & Target UI Attraction & The spring-like pull a user feels toward the exact center of a target. \\
        \addlinespace
        Hamiltonian ($\mathcal{H}$) & Total System Energy & The combined kinetic energy due to movement of the cursor and the potential energy due to the pull of the target. \\
        \bottomrule
    \end{tabularx}
    \caption{A conceptual mapping of thermodynamic principles to human-computer interaction.}
    \label{tab:thermodynamic_mapping}
\end{table*}

\subsection{The Gibbs Distribution}
\begin{definition}
\textbf{\textit{Gibbs distribution.}} For a system that is in thermal equilibrium, the Gibbs distribution is a function that computes the probability of a system being in a specific state based on that state's energy and the temperature.
\begin{equation} \label{eq:gibbs_distribution}
\mathbb{P} = \frac{1}{Z} \cdot \mathrm{e}^{-\frac{\mathcal{H}}{k_BT}}
\end{equation}
Here, $\mathcal{H}$ represents the Hamiltonian, the total energy of a microstate, $k_BT$ represents the thermal energy due to temperature $T$, and the partition function $Z$ normalizes the probability.
\end{definition}

It follows from \cref{definition:thermal_equilibrium} that as the cursor begins settling around the target, the Gibbs distribution provides a probabilistic description of the possible cursor and target states based on their energies, assigning higher probabilities to states with lower total energy. For example, if the cursor is moving at a low velocity and the distance between the cursor and the target is decreasing, this means that both kinetic and potential energies are low and that the interaction is more likely. Subsequently, the Gibbs distribution assigns a higher probability based on this low kinetic and potential energy. Jaynes showed that the Gibbs distribution can also be obtained from information theory \cite{jaynesInformationTheoryStatistical1957}. In statistical mechanics, the term $k_BT$ determines how strongly the Gibbs distribution penalizes high energy microstates. In an information system, temperature is not literal; thus, $k_BT$ acts as an uncertainty scale that determines how strongly differences in energy influence the probability output. Refer to \cref{tab:thermodynamic_mapping} for a conceptual mapping of the physics terminology to HCI. If the user moves their cursor along a very noisy path or there is substantial measurement noise in the touchpad of a laptop, the term $k_BT$ will be higher. 

If a cursor is present on a screen with 3 other buttons, there will be 3 observable cursor-button systems, each with its own microstate at an instant in time. Refer to \cref{fig:microstate_visualization}. The probability weight associated with a system being in a particular microstate is
\begin{equation} \label{eq:microstate_declaration}
    \psi_i = \mathrm{e}^{-\frac{\mathcal{H}_i}{k_BT}}, \quad \psi_i \in [0, 1].
\end{equation}
This formulation describes the unnormalized probability weight of agent-target interaction given the current energy and induced noise constraints. By the boundary around each agent and target pair, the partition function $Z$ for $N$ targets is the sum of $N$ candidate weights, represented as
\begin{equation} \label{eq:partition_function}
Z = \psi_{null} + \sum_{j=1}^{N} \psi_j.
\end{equation}
The null weight represents a system without targets, where the boundary is drawn only around the agent. This null probability accounts for states in which the user does not have the intention to interact with any target.

The weighted probability of an agent interacting with a specific target, compared to $N$ other targets, is therefore 
\begin{equation} \label{eq:raw_interaction_probability}
\mathbb{P}_i = \frac{\psi_i}{\psi_{null} + \sum_{j=1}^{N} \psi_j}, \quad \mathbb{P}_i \in [0, 1].
\end{equation}

\begin{figure}
\centering
\includegraphics[width=\columnwidth]{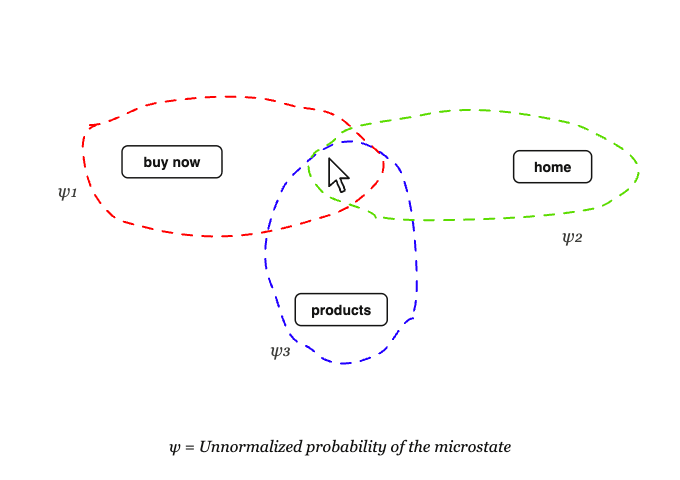}
\caption{The dotted shapes define the system boundaries for each agent-target pair. The variable $\psi_i$ represents the unnormalized probability of the respective system being in a particular microstate.}
\label{fig:microstate_visualization}
\end{figure}
 
\subsection{Interaction in Thermal Equilibrium}
\begin{proposition} \label{proposition:proposition_products}
\textbf{\textit{For an agent-target system that has relaxed towards equilibrium, the unnormalized probability weight describing the likelihood of interaction is given by }}
\begin{equation*}
    \psi_i \approx \exp\left(-\frac{\|\mathbf{v}_a\|^2}{2\sigma_{va}^2}\right) \cdot \exp\left(-\frac{\|\mathbf{v}_i\|^2}{2\sigma_{vi}^2}\right) \cdot \exp\left(-\frac{\pi\mathrm{e} \|\mathbf{x}\|^2}{W^2}\right)
\end{equation*}
\end{proposition}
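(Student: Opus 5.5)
I would start from the microstate weight \eqref{eq:microstate_declaration}, $\psi_i = \exp(-\mathcal{H}_i/k_BT)$, and substitute the Hamiltonian for a moving agent and a moving target, $\mathcal{H}=T(\mathbf{v}_a)+T(\mathbf{v}_i)+V(\mathbf{x})$. Since the exponential of a sum factorises, $\psi_i$ splits into three factors: one for the agent's kinetic energy, one for the target's kinetic energy, and one for the potential. Each factor is then identified separately. In Gibbs language, the agent and target velocities and the relative position are statistically independent degrees of freedom at equilibrium.

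For the two kinetic factors I would take the standard quadratic form $T(\mathbf{v})=\tfrac12 m\|\mathbf{v}\|^2$. This gives $\exp(-m\|\mathbf{v}\|^2/2k_BT)$, which is a Maxwell--Boltzmann Gaussian. Its variance per component, $\sigma_v^2 = k_BT/m$, follows from equipartition, $\langle \tfrac12 m v_k^2\rangle = \tfrac12 k_BT$. Writing $\sigma_{va}^2 = k_BT/m_a$ and $\sigma_{vi}^2=k_BT/m_i$ yields the first two factors exactly. The masses and temperature are absorbed into empirically observable velocity spreads, which is the natural parametrisation in an HCI setting where neither is literal.

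For the potential factor I would use the smoothness assumptions stated earlier: $V(\mathbf{0})=0$, $\nabla V(\mathbf{0})=\mathbf{0}$, and $V$ twice differentiable. A second-order Taylor expansion about the mechanical equilibrium of \cref{definition:mechanical_equilibrium} then gives $V(\mathbf{x})\approx \tfrac12 k\|\mathbf{x}\|^2$ under an isotropy assumption on the Hessian. This harmonic approximation is the source of the ``$\approx$'' in the statement. The resulting factor is a Gaussian $\exp(-\|\mathbf{x}\|^2/2\sigma_x^2)$ with $\sigma_x^2 = k_BT/k$.

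The remaining and hardest step is to pin $\sigma_x$ to the target width $W$ so that $1/(2\sigma_x^2)=\pi\mathrm{e}/W^2$, i.e.\ $\sigma_x^2 = W^2/(2\pi\mathrm{e})$. I would do this with a maximum-entropy / Shannon argument in the spirit of Jaynes and Fitts. Among endpoint distributions with a given variance, the Gaussian maximises differential entropy, $\tfrac12\log_2(2\pi\mathrm{e}\,\sigma_x^2)$. Fitts' law (Shannon formulation) treats the target as having an effective width $W$ whose entropy equals that of a uniform interval, $\log_2 W$. Equating the two gives $W=\sqrt{2\pi\mathrm{e}}\,\sigma_x$, the familiar effective-width relation $W_e\approx 4.133\sigma$.

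Substituting this $\sigma_x$ gives exactly $\exp(-\pi\mathrm{e}\|\mathbf{x}\|^2/W^2)$. I expect the main obstacle to be justifying this entropy-matching identification: it is per-dimension, so in $n$ dimensions one must assume an isotropic Gaussian and apply the matching along each axis with $W$ as the common width. This is an assumption rather than a derivation, and I would state it explicitly.
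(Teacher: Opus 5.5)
Your proposal is correct and follows the paper's proof essentially step for step. Both factorise $\exp(-\mathcal{H}_i/k_BT)$, identify $m/k_BT$ with $1/\sigma_v^2$, take the isotropic second-order Taylor expansion of $V$ about $\mathbf{x}=\mathbf{0}$ to get $k/k_BT=1/\sigma_{spatial}^2$, and finally set $W=\sigma_{spatial}\sqrt{2\pi\mathrm{e}}$. The differences are minor: the paper identifies $\sigma_v^2$ by dimensional analysis rather than equipartition, and cites the last relation as MacKenzie's empirical effective-width formula, whereas your entropy-matching argument supplies the information-theoretic reasoning behind that formula.
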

\begin{proof}
For a microstate representing an agent $a$, moving with velocity $\mathbf{v}_a$, and a target $i$ moving with velocity $\mathbf{v}_i$, with a distance $\mathbf{x}$ between them, the Hamiltonian is:
\begin{equation} \label{eq:hamiltonian_derivation}
\begin{aligned}
    \mathcal{H}_i(\mathbf{v}_a,\mathbf{v}_i,\mathbf{x}) &= T(\mathbf{v}_a) + T(\mathbf{v}_i) + V(\mathbf{x}) \\
    &= \frac{1}{2} m_a \|\mathbf{v}_a\|^2 + \frac{1}{2} m_i \|\mathbf{v}_i\|^2 + V(\mathbf{x}).
\end{aligned}
\end{equation}

Substituting the Hamiltonian into \cref{eq:microstate_declaration} gives,
\begin{equation} \label{eq:microstate_hamiltonian_derivation}
\begin{aligned}
    \psi_i &= \exp\left(-\frac{\frac{1}{2} m_a\|\mathbf{v}_a\|^2 + \frac{1}{2} m_i\|\mathbf{v}_i\|^2 + V(\mathbf{x})}{k_BT}\right) \\
    &= \exp\left(-\frac{m_a\|\mathbf{v}_a\|^2}{2k_BT}\right) \cdot \exp\left(-\frac{m_i\|\mathbf{v}_i\|^2}{2k_BT}\right) \cdot \exp\left(-\frac{V(\mathbf{x})}{k_BT}\right).
\end{aligned}
\end{equation}

The term $m/k_BT$ represents a ratio of thermal energy. Using dimensional analysis to inspect the units of this term results in
$$\frac{[m]}{[k_BT]}= \frac{\mathrm{kg}}{\mathrm{kg\cdot m^2 \cdot s^{-2}}}= \frac{1}{\mathrm{m^2 \cdot s^{-2}}}.$$
Assuming that the velocity $\mathbf{v}$ in an information system is measured in $\text{px} \cdot \text{s}^{-1}$ instead of $\text{m} \cdot \text{s}^{-1}$, this enables the substitution
\begin{equation} \label{eq:sigma_propto_temp}
  \frac{m}{k_BT} = \frac{1}{\sigma_v^2}, \quad [\sigma_v^2]=\text{px}^2 \cdot \text{s}^{-2}.
\end{equation}

Any function of the form $f(x) = \exp(-a x^2)$ represents the unnormalized kernel of a Normal distribution, where $a=\frac{1}{2\sigma^2}$ and $\sigma^2$ represents the variance. Following this axiom, the term $\frac{m}{k_BT}$ in \cref{eq:microstate_hamiltonian_derivation} can be replaced by $\frac{1}{\sigma_v^2}$, the variance in velocity.

A single microstate, which represents a frozen moment in time, has no variance in velocity, and the value is exact.  However, as the system evolves over time and subsequent microstates are observed, $\sigma_v^2$ captures the variance of velocity across these successive microstates. 

Substitution back into \cref{eq:microstate_hamiltonian_derivation} yields
\begin{equation} \label{eq:microstate_kinetic_finished}
\begin{aligned}
    \psi_i &= \exp\left(-\frac{\|\mathbf{v}_a\|^2}{2\sigma_{va}^2}\right) \cdot \exp\left(-\frac{\|\mathbf{v}_i\|^2}{2\sigma_{vi}^2}\right) \cdot \exp\left(-\frac{V(\mathbf{x})}{k_BT}\right) \\
    &= \mathcal{K}_a \cdot \mathcal{K}_i \cdot \mathcal{V}.
\end{aligned}
\end{equation}
where $\mathcal{K}$ represents the exponents of kinetic energy and $\mathcal{V}$ represents the exponent of potential energy. For a function $f(\mathbf{x})$, the multivariate Taylor series around a chosen point $\mathbf{a}$ is
\begin{equation} \label{eq:taylor_formula}
f(\mathbf{x})
=
\sum_{n=0}^{\infty}
\frac{1}{n!}
\left[
\left(
(\mathbf{x}-\mathbf{a})\cdot\nabla
\right)^n
f
\right](\mathbf{a}).
\end{equation}

Following the potential energy $V(\mathbf{x})$ in
\cref{eq:microstate_kinetic_finished}, its Taylor expansion around the
point $\mathbf{x}=\mathbf{0}$ of minimum potential is
\begin{equation} \label{eq:taylor_potential}
V(\mathbf{x})
=
V(\mathbf{0})
+
\nabla V(\mathbf{0})^{\mathsf T}\mathbf{x}
+
\frac{1}{2}
\mathbf{x}^{\mathsf T}
\nabla^2V(\mathbf{0})
\mathbf{x}
+
\dots
\end{equation}

From \cref{definition:mechanical_equilibrium},
$\nabla V(\mathbf{0})=V(\mathbf{0})=0$. Therefore,
\begin{equation} \label{eq:taylor_approximation}
V(\mathbf{x})
=
\frac{1}{2}
\mathbf{x}^{\mathsf T}
\nabla^2V(\mathbf{0})
\mathbf{x}
+
\dots
\end{equation}

If the potential has the same local curvature in every spatial direction,
then $\nabla^2V(\mathbf{0})=k\mathbf{I}$. We discard the higher order terms and only consider the second degree term. Consequently, the potential function is approximated as
\begin{equation} \label{eq:harmonic_approximation}
V(\mathbf{x})
\approx
\frac{1}{2}k\mathbf{x}^{\mathsf T}\mathbf{x}
=
\frac{1}{2}k\|\mathbf{x}\|^2
\end{equation}

This approximation becomes more accurate as $\mathbf{x} \rightarrow 0$. \cref{eq:harmonic_approximation} represents the formula for elastic potential energy, called the harmonic potential. When the cursor approaches closer to a button, the user's motor-cognitive system seeks to minimize this harmonic potential. The user attempts to reach mechanical equilibrium by aiming for the point $\mathbf{x}=0$, defined in \cref{definition:mechanical_equilibrium} to be the center of the target. Any spatial deviation from this center is met with a restoring force. This formulation suggests that the agent-target system behave like a spring. Users overshoot and correct their trajectory because of the restoring force, exhibiting damped harmonic motion about the target center, as empirical studies have previously determined \cite{ziebartProbabilisticPointingTarget2012}.

Substituting \cref{eq:taylor_approximation} into the potential energy exponent of \cref{eq:microstate_kinetic_finished} results in the following
\begin{equation} \label{eq:potential_exponent}
    \mathcal{V}(\mathbf{x}) \approx \exp\left(-\frac{k\cdot \|\mathbf{x}\|^2}{2k_BT}\right).
\end{equation}
The stiffness $k$ of the spring is defined by Hooke's law as $F = k \cdot d$. Following a dimensional analysis for the term $k/k_BT$,
\begin{equation}
[F] = [k] \cdot [d] \enspace \rightarrow \enspace
   [k] = \frac{[F]}{[d]} =\frac{\mathrm{kg \cdot m \cdot s^{-2}}}{\mathrm{m}} = \mathrm{kg \cdot s^{-2}},
\end{equation}
\begin{equation}
    \frac{[k]}{[k_BT]} =\frac{\mathrm{kg \cdot s^{-2}}}{\mathrm{\mathrm{kg \cdot m^2 \cdot s^{-2}}}} = \frac{1}{\mathrm{m^{2}}}.
\end{equation}

In an information system where distance is measured in pixels ($\mathrm{px}$) rather than meters ($\mathrm{m}$)
\begin{equation} \label{eq:term_reduced}
    \frac{k}{k_BT} =\frac{1}{\sigma_{spatial}^2}, \quad [\sigma_{spatial}^2]=\text{px}^2.
\end{equation}
Here, $\sigma_{spatial}^2$ represents the variance in position over subsequent microstates. When an agent attempts an interaction at $\mathbf{x}=\mathbf{0}$ after reaching the target, this variance in position along the trajectory becomes the distribution of endpoints around $\mathbf{x}=\mathbf{0}$. Substituting \cref{eq:term_reduced} into \cref{eq:potential_exponent} gives
\begin{equation} \label{eq:potential_exponent_gaussian}
    \mathcal{V}(\mathbf{x}) \approx \exp\left(-\frac{\|\mathbf{x}\|^2}{2\sigma_{spatial}^2}\right)
\end{equation}

This formulation suggests that for a system settled in thermal equilibrium around a point of mechanical equilibrium, the distribution of endpoints is normally distributed, as previously empirically determined \cite{MacKenzie01031992, biFFittsLawModeling2013, harrisSignaldependentNoiseDetermines1998}. These studies provide us the liberty to replace $\sigma_{spatial}$ with an empirically determined approximation\footnote{We will see in the subsequent sections that the width may be accounted for in other ways and that this step of relying upon empirical substitutions might be unnecessary.}. MacKenzie's formulation \cite{MacKenzie01031992} of the standard deviation in endpoints dictates that the effective width for a target is
\begin{equation} \label{eq:mackenzie_formulation}
W = \sigma_{spatial} \sqrt{2\pi\mathrm{e}}.
\end{equation}
This approximation successfully captures $96\%$ of the interaction endpoints. Substituting \cref{eq:mackenzie_formulation} into \cref{eq:potential_exponent_gaussian} results in
\begin{equation} \label{eq:finaled_potential_component}
    \mathcal{V}(\mathbf{x};W) \approx \exp\left(-\frac{\pi\mathrm{e} \|\mathbf{x}\|^2}{W^2}\right).
\end{equation}

Finally, the unnormalized probability weight for interaction between an agent $a$ and the target $i$, at thermal equilibrium, can be represented as
\begin{equation} \label{eq:finalized_microstate_probability}
    \psi_i \approx \exp\left(-\frac{\|\mathbf{v}_a\|^2}{2\sigma_{va}^2}\right) \cdot \exp\left(-\frac{\|\mathbf{v}_i\|^2}{2\sigma_{vi}^2}\right) \cdot \exp\left(-\frac{\pi\mathrm{e} \|\mathbf{x}\|^2}{W^2}\right) \\
\end{equation}
\end{proof}
Practically, this theorem allows information systems to predict the user's target with computational efficiency. The model evaluates in
constant O(1) time by computing a joint product of probability
distributions. It must be acknowledged that this theorem assumes that thermal equilibrium has already been reached. Following \cref{definition:thermal_equilibrium}, we assume that this thermal equilibrium will start settling in close proximity to the target. Therefore, this model works best when estimating user intention to interact near the position of mechanical equilibrium, the center of the target. The formulation achieved by substituting \cref{proposition:proposition_products} into \cref{eq:raw_interaction_probability} is what we shall use to determine the target of the user, in the website prefetching experiments. 

The model may possess a weakness when predicting the target at large distances. This is due to the discarded higher order Taylor expansion terms. While this approximation may struggle for large distances, we are only concerned with the intended target of the user as they get in close proximity. During this close proximity, and before they click the target, we shall prefetch the website link. 

The velocity term in this formulation may be replaced by the projection of the velocity vector onto the line connecting the centers of the agent and the target, making the term sensitive to the orientation of movement relative to each target. The energies comprising the Hamiltonian $\mathcal{H}_i$ characterize a microstate; consequently, the velocities and distance in this formulation are instantaneous. The null probability $\psi_{null}$ is represented by a boundary that includes only the agent. It can be assumed that if an agent does not intend to interact with any target, then they might be stationary and would not expend any energy unnecessarily. Since the agent might be stationary and the system contains no targets, the kinetic and potential energies are 0. Therefore, the Hamiltonian is $\mathcal{H}=0$. Following this, the probability $\psi_{null}=1$ in \cref{eq:raw_interaction_probability}.

\subsection{Langevin dynamics and Fitts' law}
The Gibbs distribution describes the probability of different microstates once a system has relaxed towards equilibrium. However, it does not describe how the system approaches thermal equilibrium or how long this process takes. Following the target center defined to be $\mathbf{x}=\mathbf{0}$, the target occupies the interval
\begin{equation}
-\frac{W}{2}
\leq x(t)
\leq
\frac{W}{2},
\end{equation}
where $x(t)$ is a function representing the distance between the agent and target as a function of time $t$. The initial displacement is $x(0)=D$. The agent adopts a stochastic path towards the target. Langevin dynamics provides a description of how a system changes over time when it experiences random and fluctuating forces. The stochastic differential equation for these fluctuating forces is \cite{kramersBrownianMotionField1940}:
\begin{equation} \label{eq:langevin_dynamics}
m x''(t)=-\gamma x'(t)-kx(t)+\eta(t),
\end{equation}
where $m>0$ is the effective inertia, $\gamma>0$ is the damping coefficient, and $\eta(t)$ represents the instantaneous random variability acting on the movement. The heat bath represents the source of this variability and dissipation, while $\eta(t)$ represents its effect at a particular instant. In the standard equilibrium white noise model,
\begin{equation} \label{eq:langevin_noise}
\left\langle\eta(t)\right\rangle=0.
\end{equation}
The condition states that the random force has no preferred direction when averaged over comparable interactions. In this equation, when the random fluctuations balance with damping, the Gibbs distribution emerges and thermal equilibrium is established. Taking the average of \cref{eq:langevin_dynamics} and using $\langle\eta(t)\rangle=0$ gives
\begin{equation} \label{eq:mean_langevin_dynamics}
m\langle x''(t) \rangle+\gamma\langle x'(t)\rangle+k\langle x(t)\rangle=0.
\end{equation}
Thus, individual movements may remain noisy, but their mean follows the equation of a damped harmonic oscillator. Empirical studies previously performed on cursor trajectories suggest that human movement about a target involves overshooting and correcting \cite{ziebartProbabilisticPointingTarget2012}, and these oscillations graphically appear as underdamped. The solution of \cref{eq:mean_langevin_dynamics} for an underdamped oscillator can be written as
\begin{equation} \label{eq:mean_langevin_solution}
\langle x(t) \rangle=D\exp(-\lambda t)\cos(\omega_dt-\phi),
\end{equation}
Here, $\lambda$ is the rate at which the oscillation decreases, $\omega_d$ is its frequency, and $\phi$ is its starting phase. See Figure 3 for a visualization of how
the damped harmonic motion appears graphically

\begin{theorem} \label{theorem:theorem_fitts}
\textbf{\textit{For a target of width $W$ at a distance $D$ from the agent, the time taken for the amplitude of the underdamped Langevin oscillator to settle within the boundaries of the target is given by}}
\begin{equation*} \label{eq:fitts_law_recovered}
MT=a+b\log_2\left(\frac{2D}{W}\right)=\tau_e+\frac{\ln(2)}{\lambda} \log_2\left(\frac{2D}{W}\right)
\end{equation*}
\end{theorem}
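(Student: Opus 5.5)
The plan is to work with the envelope of the mean trajectory in \cref{eq:mean_langevin_solution} rather than with $\langle x(t)\rangle$ itself. Since $|\cos(\omega_d t-\phi)|\le 1$, we have $|\langle x(t)\rangle|\le A(t)$ with $A(t)=D\mathrm{e}^{-\lambda t}$, where $\lambda=\gamma/(2m)$ for the underdamped case $\gamma^2<4mk$ of \cref{eq:mean_langevin_dynamics}. I will define the settling time as the first time $t^\ast$ at which the amplitude, not merely a zero crossing of the oscillation, lies within the target interval $-\tfrac{W}{2}\le x\le\tfrac{W}{2}$. After $t^\ast$, every later overshoot stays inside the target because $A(t)$ is monotonically decreasing. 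This monotonicity is why I use the envelope rather than the first entry time of the cosine term.

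Next I solve $A(t^\ast)=W/2$, that is $D\mathrm{e}^{-\lambda t^\ast}=W/2$. This gives $t^\ast=\frac{1}{\lambda}\ln\!\left(\frac{2D}{W}\right)$. Using $\ln y=\ln(2)\log_2 y$, it can be rewritten as
\begin{equation*}
t^\ast=\frac{\ln(2)}{\lambda}\log_2\!\left(\frac{2D}{W}\right),
\end{equation*}
which is the slope term with $b=\ln(2)/\lambda$. This also gives the interpretation that $1/b=\lambda/\ln 2$ is a decay rate measured in bits per second, which links the result to Fitts' information-rate reading.

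The intercept $a=\tau_e$ has no origin in \cref{eq:mean_langevin_dynamics} itself, so I must justify it separately. I will treat $\tau_e$ as a latency before the oscillator dynamics begin: reaction and perceptual time before the movement, plus the dwell or click time after settling. These are additive phases that do not depend on $D$ or $W$. Summing them gives $MT=\tau_e+t^\ast$. I also need to state the regime of validity. The formula requires $2D/W\ge 1$, and it requires $D$ to be the amplitude prefactor, i.e.\ $\phi$ is absorbed or taken as $0$ so that $A(0)=D$. If $\phi\neq 0$, the prefactor becomes $D/\cos\phi$, and the resulting additive $\ln(1/\cos\phi)/\lambda$ can also be folded into $\tau_e$.

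I expect the main obstacle to be this modelling step rather than the algebra. Two points need justification: why the mean trajectory and its envelope are the right notion of ``settling within the boundaries'' for noisy individual movements, and why the phase and latency contributions are legitimately constant across $D$ and $W$. For the first, I will argue from \cref{eq:langevin_noise} that the noise averages out of the mean, so that the ensemble-average endpoint governs the typical movement time. Fluctuations about it are then accounted for by the equilibrium spread already captured in \cref{proposition:proposition_products}.
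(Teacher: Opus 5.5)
Your proposal is correct and follows the same route as the paper. Both bound the mean trajectory by its envelope $D\mathrm{e}^{-\lambda t}$, solve $D\mathrm{e}^{-\lambda t}=W/2$ to get $t=\frac{\ln 2}{\lambda}\log_2(2D/W)$, and add a constant initiation/click time $\tau_e$; your extra points ($\lambda=\gamma/(2m)$, the condition $2D/W\ge 1$, and folding the phase correction $\ln(1/\cos\phi)/\lambda$ into $\tau_e$ when $\phi$ does not depend on $D$ and $W$) are sound refinements the paper leaves implicit.
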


\begin{proof} For oscillations to be contained within the target over time, their amplitude must be in the interval $[\frac{-W}{2},\frac{W}{2}]$. Represented as an inequality,
\begin{equation}
\frac{-W}{2} \le D e^{-\lambda  t} \le \frac{W}{2}.
\end{equation}
Solving for $t$ gives the time taken for the oscillations to settle inside the target.

\begin{equation}
\begin{aligned}
    e^{-\lambda t} &= \frac{W}{2D}\\
    t &= \frac{\ln\left(\frac{2D}{W}\right)}{\lambda}\\
    t &= \left( \frac{\ln(2)}{\lambda} \right) \log_2\left(\frac{2D}{W}\right).
\end{aligned}
\end{equation}
The total interaction interval MT must account for the time taken $\tau_e$ to initiate the action. This includes the time it takes for the user to click a button and any extra time not taken into account. Therefore, the total time for the interaction interval is
\begin{equation} \label{eq:total_int_intr_time}
MT= \tau_e +  t.
\end{equation}
After substitution, this yields
\begin{equation} \label{eq:total_int_intr_time_final}
MT= \tau_e +  \left( \frac{\ln(2)}{\lambda} \right) \log_2\left(\frac{2D}{W}\right).
\end{equation}
which is the logarithmic form of Fitts' law \cite{MacKenzie01031992}.
\end{proof}

Other studies have also connected Fitts' Law to a damped oscillator \cite{lammertSpeedaccuracyTradeoffsHuman2018}. The Gibbs distribution and Fitts' law now describe two parts of the same framework. The Gibbs distribution describes the stable probability distribution approached by the system. Langevin dynamics describes how the system relaxes towards that distribution. 

It should be noted that Fitts' Law does not describe the time taken to reach equilibrium. Instead, it is the time taken for the agent to settle inside the target boundary.

\begin{figure}[htbp]
    \centering
    \begin{tikzpicture}
    \begin{axis}[
        width=\linewidth,
        height=0.70\linewidth,
        axis x line=bottom,
        axis y line=left,
        ymin=-160,
        ymax=190,
        xmin=0,
        xmax=3.25,
        xtick={0,0.5,1,1.5,2,2.5,3},
        xlabel={Time ($t$)},
        ylabel={Position ($\textnormal{px}$)},
        ymajorgrids=true,
        grid style=solid,
        tick align=outside,
        enlargelimits=false,
        clip=true,
        clip mode=individual,
        axis on top
    ]

        \fill[green!15!white]
            (axis cs:0,-40)
            rectangle
            (axis cs:3,40);

        \draw[black, thick]
            ([xshift=-4pt]axis cs:0,40)
            --
            ([xshift=5pt]axis cs:0,40);

        \draw[black, thick]
            ([xshift=-4pt]axis cs:0,-40)
            --
            ([xshift=5pt]axis cs:0,-40);

        \node[
            anchor=east,
            font=\small
        ] at ([xshift=-7pt]axis cs:0,40)
        {$W/2$};

        \node[
            anchor=east,
            font=\small
        ] at ([xshift=-7pt]axis cs:0,-40)
        {$-W/2$};

        \addplot[
            domain=0:3,
            samples=200,
            dashed,
            gray!70,
            thick
        ] {
            160 * exp(
                -0.5835 * x
                -0.3864 * x^2
                +0.0889 * x^3
            )
        };

        \addplot[
            domain=0:3,
            samples=200,
            dashed,
            gray!70,
            thick
        ] {
            -160 * exp(
                -0.5835 * x
                -0.3864 * x^2
                +0.0889 * x^3
            )
        };

        \addplot[
            domain=0:3,
            samples=300,
            blue!80!black,
            very thick
        ] {
            199 * exp(-1.080 * x)
            * cos(deg(3.701 * x + 0.637))
        };

        \addplot[
            mark=*,
            blue!80!black,
            mark size=2.5pt
        ] coordinates {
            (0,160)
        };

        \node[
            anchor=west,
            font=\small
        ] at (axis cs:0.06,160)
        {Initial Amplitude ($D$)};

        \addplot[
            mark=*,
            blue!80!black,
            mark size=2pt
        ] coordinates {
            (0.60,-100)
        };

        \node[
            anchor=north,
            font=\small
        ] at (axis cs:0.60,-106)
        {First Overshoot};

        \addplot[
            mark=*,
            blue!80!black,
            mark size=2pt
        ] coordinates {
            (1.45,40)
        };

        \draw[<->, thick]
            (axis cs:3.08,-40)
            --
            node[
                midway,
                right,
                inner sep=2pt,
                font=\small
            ] {$W$}
            (axis cs:3.08,40);

    \end{axis}
\end{tikzpicture}
    \caption{This figure represents a target with a width of  $W$px. When the agent overshoots after each approach, the maximum overshoot distance decays and eventually gets contained within the target bounds, represented by the green segment.}
    \label{fig:damped_oscillator}
\end{figure}

\subsection{Parameterized Potential Field Curvature}
Agent and target interaction is influenced by multiple factors, such as the color contrast of a button with its environment and the clarity of the text inside it. It is a given that each of these properties has an influence on the target's attraction towards a user. We hypothesize that these parameters influence the attractive force towards the agent by modifying the steepness of the potential field. This is motivated by the relationship between force and potential energy, as seen in \cref{eq:force_potential}.

\begin{lemma}\label{lemma:lemma_parameterized_field}\textbf{\textit{The approximation of the potential field due to $n$ parameters $P$, all of which independently influence the strength of the field is
}}
\begin{equation*}
   V(\mathbf{x};P_1,...,P_n) = \frac{1}{2}(k_x)\|\mathbf{x}\|^2+\frac{1}{4} (k_{P_1}) \|\mathbf{x}\|^2 P_1^2 +\dots+ \frac{1}{4} (k_{P_n}) \|\mathbf{x}\|^2 P_n^2
\end{equation*}
\end{lemma}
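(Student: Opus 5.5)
The plan is to treat $V$ as a function of the joint variable $(\mathbf{x},P_1,\dots,P_n)$ and redo the Taylor argument of \cref{proposition:proposition_products}, now around the joint point $(\mathbf{0},0,\dots,0)$. I will extend \cref{definition:mechanical_equilibrium} by assuming two things. First, $V(\mathbf{0};P)=0$ and $\nabla_{\mathbf{x}}V(\mathbf{0};P)=\mathbf{0}$ for every choice of parameters, since the center stays the point of zero force whatever the button's color or label. Second, $V$ is smooth and even in each $P_j$, because the reference value $P_j=0$ is an extremum of attractiveness. Expanding with \cref{eq:taylor_formula} in all variables, every term with no $\mathbf{x}$-dependence vanishes, because $V(\mathbf{0};P)\equiv 0$. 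Every term linear in $\mathbf{x}$ vanishes, because $\nabla_{\mathbf{x}}V(\mathbf{0};P)\equiv\mathbf{0}$. So the surviving terms are at least quadratic in $\mathbf{x}$.

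I then use isotropy as in \cref{eq:harmonic_approximation}, so the $\mathbf{x}$-dependence enters only through $\|\mathbf{x}\|^2$. This gives
\[
V(\mathbf{x};P)=\tfrac12\|\mathbf{x}\|^2\,k(P)+\dots,\qquad k(P):=\partial^2_{\|\mathbf{x}\|}V(\mathbf{0};P).
\]
Here $k(P)$ is a parameter-dependent stiffness, and the parameters act only by changing the curvature of the well, which matches the motivation from \cref{eq:force_potential}.

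Next I expand $k(P)$ about $P=0$. Write $k_x:=k(0)$. Evenness in each $P_j$ kills the first-order terms $\partial_{P_j}k(0)P_j$. The independence hypothesis is interpreted as vanishing mixed partials, $\partial_{P_i}\partial_{P_j}k(0)=0$ for $i\neq j$, so the cross terms disappear. Keeping terms through second order in the $P_j$ leaves
\[
k(P)\approx k_x+\sum_{j}\tfrac12\,\partial^2_{P_j}k(0)\,P_j^2 .
\]
Substituting this into $\tfrac12 k(P)\|\mathbf{x}\|^2$ gives the $\tfrac12 k_x\|\mathbf{x}\|^2$ term plus terms $\tfrac14\,\partial^2_{P_j}k(0)\,\|\mathbf{x}\|^2P_j^2$. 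Setting $k_{P_j}:=\partial^2_{P_j}k(0)$ yields exactly the stated coefficient $\tfrac14$. As in \cref{eq:taylor_approximation}, I drop all higher-order terms: powers of $\|\mathbf{x}\|$ above two, powers of $P_j$ above two, and any residual mixed terms. Equivalently, the $\tfrac14$ is the multinomial factor $\tfrac{1}{4!}\binom{4}{2,2}=\tfrac14$ of the fourth-order term $\partial_{x}^2\partial_{P_j}^2V$ in the joint Taylor expansion.

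The main obstacle is justifying the structural assumptions rather than the calculation. These are the vanishing of the pure-$P$ and linear-in-$P$ terms, and the reading of "independently influence" as vanishing mixed derivatives. Without these assumptions the expansion would contain $P_j$-only offsets, odd powers of $P_j$, and cross terms $P_iP_j\|\mathbf{x}\|^2$. My first move is to derive the first two from the requirement that the minimum stay at $\mathbf{x}=\mathbf{0}$ with value $0$ for all parameter values. I will then state evenness (or choose the origin of each $P_j$ at the extremum) and separability explicitly as modeling hypotheses. The other delicate point is that this is a truncated mixed-order expansion, not a single-degree truncation, so it should be presented as an approximation valid for small $\|\mathbf{x}\|$ and small deviations $P_j$, in the same spirit as \cref{eq:harmonic_approximation}.
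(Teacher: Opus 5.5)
Your proposal is correct and follows essentially the same route as the paper. Both arguments impose $V(\mathbf{0};P)=0$ and $\nabla_{\mathbf{x}}V(\mathbf{0};P)=\mathbf{0}$, use isotropy to write $V\approx\frac{1}{2}k(\mathbf{0};P)\|\mathbf{x}\|^2$, and then Taylor-expand the stiffness $k$ to second order about $P=0$ with the linear and mixed terms set to zero, which yields the factor $\frac{1}{4}$ with $k_{P_j}=\partial^2_{P_j}k$.

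The only difference is that you motivate $\partial_{P_j}k(0)=0$ by evenness in $P_j$, whereas the paper simply assumes the linear terms vanish. Unlike the pure-$P$ offsets, this step does not follow from the equilibrium condition, so it remains a modeling hypothesis in both versions, as you ultimately acknowledge.
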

\begin{proof}
    Refer to \cref{subsubsection:proof_for_lemma_parameterized_field} for the proof.
\end{proof}

For two parameters, the clarity of the label inside a button and the color contrast with its environment, this would become 
\begin{equation}
   V(\mathbf{x};L,C) = \frac{1}{2}(k_x)\|\mathbf{x}\|^2+\frac{1}{4} (k_L) \|\mathbf{x}\|^2 L^2 + \frac{1}{4} (k_C) \|\mathbf{x}\|^2 C^2
\end{equation}

As described in the proof, we set $L=0$ and $C=0$ to denote the reference values at which label clarity is extremely ambiguous and color contrast is extremely poor.

The width of a target is definitely amongst its properties. Therefore, the potential field function could easily be written as $V(x;W,L,C)$. Substituting into the Hamiltonian, this would introduce three additional multiplicative terms in \cref{proposition:proposition_products}. If one is concerned with only the width, we could also represent \cref{proposition:proposition_products} with just the exponent of width by willingly ignoring $L$ and $C$. 

However, there is a reason that the step of empirical substitution in \cref{eq:mackenzie_formulation} has been adopted in the construction of \cref{proposition:proposition_products}. It is because the width and distance share the same unit of pixels, so they may not be \textit{independent}, as required by the proof of this lemma. Another perspective is that the units may not even matter because the distance is represented by a variable and the width is a constant parameter for a target that does not change. There is additional ambiguity surrounding this lemma regarding the exact mathematical representation of label clarity and color contrast. Furthermore, no criterion has yet been formalized that distinguishes what counts as a parameter allowed to change the potential field's strength. The proof also relies upon the potential field curvature being the same in every direction. If an attempt is made to apply this lemma to regular buttons on a website, the assumption of a perfectly curved and smooth potential field is easily countered with a simple exercise:

We require that a button on a website is composed of multiple pixels. Additionally, we also require that a single pixel has a potential field as seen in \cref{fig:teaser_potential_well}. If we assemble all these pixels together to form the desired button, then their corresponding potential fields must add up as well. It is highly unlikely that any particular button will have a perfectly smooth potential field as a result of this addition process, much less all the buttons on the website. 

If we accept that this lemma only applies to single pixels rather than sets of pixels (buttons), then label clarity as a target pixel  property may not make sense at all. In that scenario, only the width and color of the pixel may make sense. Conceding to the ambiguity surrounding this lemma, we leave \cref{proposition:proposition_products} unchanged by keeping Mackenzie's formulation. Nonetheless, this lemma hints at a potentially powerful theorem that may emerge in future studies. This initial formulation serves as an important step in that direction.

In light of these insights, it reveals further details about \cref{proposition:proposition_products}. Implicitly, \cref{proposition:proposition_products} approximates the entire button with its interactive (clickable) area as a point target, the size of a pixel. The model requires the user to aim for this exact center pixel, defined to be the position of mechanical equilibrium. The output of the probability weight $\psi_i$ will be 1 if the user clicks on this exact pixel, but will decay exponentially if the user clicks further than this point. The model is able to penalize the probability output based on the inclusion of width $W$. However, this formulation does not natively understand target geometry. 

As we will see in the user study, this model will struggle when dealing with large targets. However, conventional user interfaces are built with buttons that are sufficiently small \cite{rossEpidemiologyinspiredLargescaleAnalysis2020}, for this model to serve as a robust approximation in production environments.  

\subsection{Endpoint Distribution due to Parameters}
The parametric properties of the potential field such as the color of a button or the clarity of its label might also have an impact on the distribution of endpoints. If a button has vague text or poor color contrast, an agent must have a larger endpoint distribution. The Equipartition theorem from statistical mechanics will be useful in proving this corollary and for the subsequent sections.
\begin{definition}\label{definition:equipartition}
\textbf{\textit{Equipartition theorem.}}
At thermal equilibrium, each independent quadratic term in the Hamiltonian has an average energy of $\frac{1}{2}k_BT$. For a harmonic mode in $d$ dimensions \footnote{Note that we have used $K$ to denote kinetic energy in this definition rather than $T$. This is because $T$ has been reserved for temperature within the context of this definition. Proofs calling upon this definition may replace kinetic energy with $T$ or $K$ as desired, but the notation will be clarified whenever this definition is called.},
\begin{equation*}
\mathcal{H}
=
K+V
=
\frac{1}{2}m\|\mathbf{v}\|^2
+
\frac{1}{2}k\|\mathbf{x}\|^2.
\end{equation*}
The kinetic and potential energies each contain $d$ independent quadratic terms. Therefore,
\begin{equation*}
\left\langle K\right\rangle
=
\frac{d}{2}k_BT,
\qquad
\left\langle V\right\rangle
=
\frac{d}{2}k_BT.
\end{equation*}
It follows that for an agent-target system in thermal equilibrium,
\begin{equation*}
\left\langle K\right\rangle
=
\left\langle V\right\rangle.
\end{equation*}
\end{definition}

\begin{corollary} \label{corollary:paramaterised_potential}
\textbf{\textit{At thermal equilibrium, as the magnitude of an independent target parameter $P$ increases, the endpoint distribution scales inversely with $P$.}}
\begin{equation}
\sigma_{spatial} \propto \frac{1}{P}
\end{equation}
\end{corollary}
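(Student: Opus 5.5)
The plan is to combine the parameterized potential of \cref{lemma:lemma_parameterized_field} with the Equipartition theorem (\cref{definition:equipartition}) and read off the spatial variance. First, isolate a single independent parameter $P$ by holding the others (and the baseline stiffness contribution) fixed. The field then takes the harmonic form
\begin{equation*}
V(\mathbf{x};P)=\frac{1}{2}k_{\mathrm{eff}}(P)\|\mathbf{x}\|^2,\qquad k_{\mathrm{eff}}(P)=k_x+\frac{1}{2}k_P P^2 .
\end{equation*}
This is still a single isotropic quadratic mode in $d$ dimensions. The Equipartition theorem therefore applies with $k$ replaced by $k_{\mathrm{eff}}(P)$.

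Next, apply equipartition to the potential part: $\langle V\rangle=\frac{d}{2}k_BT$. This gives $\frac{1}{2}k_{\mathrm{eff}}(P)\langle\|\mathbf{x}\|^2\rangle=\frac{d}{2}k_BT$. At thermal equilibrium the position fluctuates about $\mathbf{x}=\mathbf{0}$, so $\langle\|\mathbf{x}\|^2\rangle=d\,\sigma_{spatial}^2$. It follows that
\begin{equation*}
\sigma_{spatial}^2=\frac{k_BT}{k_{\mathrm{eff}}(P)},
\end{equation*}
which is consistent with the identification $k/k_BT=1/\sigma_{spatial}^2$ in \cref{eq:term_reduced}. As a cross-check, the Gibbs factor $\exp(-V/k_BT)$ with this $V$ is exactly a Gaussian of that variance, as in \cref{eq:potential_exponent_gaussian}. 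Finally, substitute $k_{\mathrm{eff}}$ to get
\begin{equation*}
\sigma_{spatial}=\sqrt{k_BT}\,\bigl(k_x+\tfrac{1}{2}k_PP^2\bigr)^{-1/2}.
\end{equation*}

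The main obstacle is the last step: this expression is not exactly proportional to $1/P$ because of the baseline term $k_x$. I would handle it with an asymptotic argument. For a parameter that dominates the field's curvature ($\frac{1}{2}k_PP^2\gg k_x$), the expression reduces to
\begin{equation*}
\sigma_{spatial}\approx\sqrt{2k_BT/k_P}\;\frac{1}{P},
\end{equation*}
which gives $\sigma_{spatial}\propto 1/P$. Alternatively, I would state the result as the exact scaling of the $P$-induced contribution to the stiffness, measured from the reference value $P=0$ of the lemma.

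The argument must also assume that $k_BT$ (the noise level of the heat bath) is independent of $P$. Otherwise, a change in the button's color or label could alter the fluctuations as well as the stiffness, and the proportionality would not follow. I would state this as an explicit hypothesis, together with the strict monotonic decrease of $\sigma_{spatial}$ in $P$. That decrease holds exactly, with no asymptotics, since $k_P>0$.
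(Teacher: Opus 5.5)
Your proposal is correct and follows essentially the paper's own argument: equipartition applied to the lemma's isotropic stiffness $k_x+\tfrac{1}{2}k_PP^2$ gives $\sigma_{spatial}^2\propto (k_x+\tfrac{1}{2}k_PP^2)^{-1}$, and the $1/P$ law then holds only in the regime where the parameter term dominates $k_x$, exactly as the paper does for $L$. The differences are cosmetic: you normalize $\sigma_{spatial}^2$ per coordinate, whereas the paper keeps the factor $d$ from the total variance, and the paper uses your explicit hypotheses ($k_BT$ independent of $P$, and $k_P>0$) tacitly.
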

\begin{proof}
    Refer to \cref{subsubsection:proof_for_corollary_paramaterised_potential} for the proof.
\end{proof}
This formulation suggests that increasing the label clarity $L$ reduces the standard deviation of the endpoints $\sigma_{spatial}$ and makes the agent more accurate. If a popup appears on a website while the user is attempting to click a button with a perfectly clear label, their accuracy may worsen. The user has to exert additional energy to recalculate a trajectory to the target, resulting in a noisy maneuver towards the button. Therefore, this theorem does not hold for states where the UI itself is changing around a target and thermal equilibrium may not hold.

\subsection{The Speed-Accuracy Tradeoff}
In statistical mechanics, temperature governs the distribution of kinetic and potential energies across successive microstates, dictating how physical properties such as position and velocity evolve over time. Within our framework, the temperature is a property of the information system, induced by the user's cognitive friction, motor noise and device sensor effects that uniformly scales the energy of all the agent-target pairs once the system is in thermal equilibrium. See \cref{tab:thermodynamic_mapping}. \footnote{The table may present confusion because it seems to assign the same description to the term $k_BT$ as what has been used to describe temperature $T$ in this passage. It must be noted that $k_B$ is the Boltzmann constant, which is just a scalar value, whereas $T$ is the actual variable influencing the system.}

\begin{theorem} \label{theorem:velocity_spatial_variance}
\textbf{\textit{At thermal equilibrium, the variance of endpoints is proportional to the mean squared speed.}}
\begin{equation*}
\sigma_{spatial}^2 \propto   \langle \|\mathbf{v}\|^2 \rangle 
\end{equation*}
\end{theorem}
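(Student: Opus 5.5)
The plan is to use the Equipartition theorem (\cref{definition:equipartition}) for the harmonic Hamiltonian obtained in the proof of \cref{proposition:proposition_products}. That Hamiltonian is $\mathcal{H}=\frac{1}{2}m\|\mathbf{v}\|^2+\frac{1}{2}k\|\mathbf{x}\|^2$ in $d$ spatial dimensions. The aim is to express both the mean squared speed and the endpoint variance in terms of the common thermal energy $k_BT$. Eliminating $k_BT$ between the two expressions then gives the proportionality, with a constant depending only on $m$ and $k$.

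\textbf{Kinetic side.} Applying equipartition to the kinetic term gives $\langle K\rangle=\frac{1}{2}m\langle\|\mathbf{v}\|^2\rangle=\frac{d}{2}k_BT$. Hence $\langle\|\mathbf{v}\|^2\rangle=d\,k_BT/m$. This agrees with the substitution $m/k_BT=1/\sigma_v^2$ in \cref{eq:sigma_propto_temp}, which gives $\langle\|\mathbf{v}\|^2\rangle=d\,\sigma_v^2$.

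\textbf{Potential side.} Applying equipartition to the potential term gives $\frac{1}{2}k\langle\|\mathbf{x}\|^2\rangle=\frac{d}{2}k_BT$. Because the Gibbs marginal in $\mathbf{x}$ is centred at the mechanical equilibrium $\mathbf{x}=\mathbf{0}$, we have $\langle\mathbf{x}\rangle=\mathbf{0}$, so $\langle\|\mathbf{x}\|^2\rangle$ is the total endpoint variance $d\,\sigma_{spatial}^2$. This gives $\sigma_{spatial}^2=k_BT/k$, consistent with \cref{eq:term_reduced}.

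\textbf{Combining.} Using $\langle K\rangle=\langle V\rangle$, dividing the two relations gives
\begin{equation*}
\sigma_{spatial}^2=\frac{m}{k}\cdot\frac{\langle\|\mathbf{v}\|^2\rangle}{d}\propto\langle\|\mathbf{v}\|^2\rangle .
\end{equation*}
The proportionality constant is $m/(kd)$. It is fixed for a given target stiffness (including any parameters from \cref{lemma:lemma_parameterized_field}) and a given effective inertia. It is independent of temperature, so changes in speed driven by $T$ translate directly into changes in endpoint spread. I would close by noting that this is the Schmidt-type speed--accuracy tradeoff: faster movement means higher effective temperature, and therefore a wider endpoint distribution.

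\textbf{Main obstacle.} The hardest step is justifying that the variance of $\mathbf{x}$ under the equilibrium Gibbs distribution is the same as the endpoint variance. Equipartition describes fluctuations over microstates at equilibrium, whereas endpoints are positions sampled at click time. I would argue that clicks occur after relaxation, as in the discussion after \cref{theorem:theorem_fitts}, and that the click instant is independent of the phase-space state. Under that assumption the endpoint is a Gibbs sample, as already used around \cref{eq:potential_exponent_gaussian}.

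A secondary subtlety is that $k$ and $m$ must be held fixed while $T$ varies. I would state this explicitly as a hypothesis, since otherwise the constant $m/(kd)$ could change along with the speed.
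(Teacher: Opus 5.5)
Your proposal is correct and follows essentially the same route as the paper, which also ties both quantities to the common temperature: it reads $\sigma_{spatial}^2\propto T$ directly off \cref{eq:term_reduced}, gets $\langle\|\mathbf{v}\|^2\rangle\propto\langle K\rangle\propto T$ from equipartition, and concludes the proportionality. Your version additionally makes explicit the constant $m/(kd)$ and the hypotheses the paper leaves unstated, namely fixed $m$ and $k$ and endpoints treated as equilibrium samples centred at $\mathbf{x}=\mathbf{0}$.
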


\begin{proof}
    Refer to \cref{subsubsection:proof_for_velocity_spatial_variance} for the proof.
\end{proof}
This is consistent with Schmidt’s law \cite{schmidtMotoroutputVariabilityTheory1979}: the faster a user moves a cursor, the less spatially accurate they are. However, this is not always true. A counter example is when users flick their cursor across the screen with high velocity to click a target with great precision in desktop gaming. This further strengthens the notion that thermal equilibrium does not exist across the entire interaction and this theorem is only applicable to a system in thermal equilibrium. \cref{theorem:velocity_spatial_variance} does not hold when the user is actively exerting energy and accelerating the cursor. 

\subsection{Digital Interaction Field Theory}
The prior sections have provided insight into the different phases of the interaction life cycle. Each agent-target pair defines a potential field from the state of the interface and Langevin dynamics provide the stochastic differential equations that describe the movement towards a position of mechanical equilibrium as the system relaxes towards thermal equilibrium. In a state of thermal equilibrium, the Gibbs distribution among other theorems provide substantial modeling options. Since Fitts' Law is applicable to a variety of domains, as seen in
\begin{enumerate}
    \item Walking and foot placement \cite{druryVisuallycontrolledLegMovements1995}
    \item Gaze based interaction \cite{schuetzExplanationFittsLawlike2019}
    \item Finger touch and touchscreen typing \cite{biFFittsLawModeling2013}
    \item Brain-computer interfaces \cite{feltonEvaluationModifiedFitts2009}
    \item Tongue operated assistive input \cite{yousefiUsingFittssLaw2010}
    \item Midair gesture interaction \cite{dubePushTapDwell2022}
    \item Virtual reality object manipulation \cite{bricklerFittsLawEvaluation2020, aminiSystematicReviewFitts2025}
    \item Device tilt interaction \cite{mackenzieFittsTiltApplicationFitts2012}
    \item Haptic feedback interfaces \cite{kimAssistingVisuallyImpaired2014}
\end{enumerate}
And because Fitts' Law is a special case of Langevin dynamics with underdamped oscillations, then the differential equations governing stochastic movement have substantial potential to be generalizable with a wide variety of applications. For example, the differential equations for approach trajectories concerning critical and over-damping may be solved to yield expressions, as desired. These three \footnote{Underdamped, overdamped and critically damped oscillations are the three types being described here. We have already recovered Fitts' Law from the case of underdamped trajectories.} damping regimes exhaust the case of the harmonic potential considered in this study, but not the broader family of Langevin dynamics. Nonlinear potentials, changing targets, correlated noise, coupled coordinates, and memory effects may produce additional approach trajectories and movement time relations. 

From the counter-examples given in the previous sections regarding a departure from thermal equilibrium and a failure of prior formulations to describe certain types of interactions, a diagnostic can be conceived that provides insight into whether a system is in thermal equilibrium, and by extension, if the preceding theorems are applicable.
\begin{theorem}
\label{theorem:non_equilibrium_theorem}
\textbf{\textit{For a harmonic agent-target model, an interaction lies outside the equilibrium phase if}}
\begin{equation}
\left|\langle T\rangle-\langle V\rangle\right|>0.
\end{equation}
Equivalently,
\begin{equation}
\langle\mathcal{L}\rangle
=
\langle T\rangle-\langle V\rangle
\ne 0.
\end{equation}
\end{theorem}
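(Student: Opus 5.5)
The plan is to prove the contrapositive. For the harmonic agent-target model, thermal equilibrium forces $\langle T\rangle=\langle V\rangle$. So any observed interval with $\langle T\rangle\neq\langle V\rangle$ cannot be in the equilibrium phase.

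\textbf{Step 1: set up the equilibrium averages.} Take the harmonic Hamiltonian from \cref{proposition:proposition_products}, $\mathcal{H}=\frac{1}{2}m\|\mathbf{v}\|^2+\frac{1}{2}k\|\mathbf{x}\|^2$, which rests on the truncation in \cref{eq:harmonic_approximation}. Assume the system is at thermal equilibrium, so microstates are distributed by the Gibbs distribution \cref{eq:gibbs_distribution}.

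\textbf{Step 2: compute $\langle T\rangle$ and $\langle V\rangle$.} Invoke the Equipartition theorem (\cref{definition:equipartition}), with $K$ written as $T$ as its footnote permits. This gives $\langle T\rangle=\frac{d}{2}k_BT=\langle V\rangle$, hence $\langle\mathcal{L}\rangle=\langle T\rangle-\langle V\rangle=0$.

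To keep the argument self-contained, I would also verify equipartition directly. The Gibbs weight factorises into independent Gaussians in each velocity and position coordinate. Each coordinate then has second moment $k_BT/m$ or $k_BT/k$, so each quadratic term averages to $\frac{1}{2}k_BT$.

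A complementary route goes through the Langevin description \cref{eq:langevin_dynamics}. In the stationary regime, $\frac{d}{dt}\langle x x'\rangle=0$ gives $\langle x'^2\rangle+\langle x x''\rangle=0$. Multiplying \cref{eq:langevin_dynamics} by $x$ and averaging then yields $m\langle x'^2\rangle=k\langle x^2\rangle$. This is the virial identity $2\langle T\rangle=2\langle V\rangle$, once the term $\gamma\langle x x'\rangle=\frac{\gamma}{2}\frac{d}{dt}\langle x^2\rangle$ vanishes at stationarity and $\langle x\eta\rangle=0$.

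\textbf{Step 3: take the contrapositive.} Equilibrium implies $\langle\mathcal{L}\rangle=0$. Therefore $|\langle T\rangle-\langle V\rangle|>0$ implies the interaction is not in the equilibrium phase. The two displayed forms in the statement are equivalent, since $|y|>0\iff y\neq0$.

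\textbf{Main obstacle.} The hard part is Step 2's stationarity claim in the Langevin route. The condition $\langle x\eta\rangle=0$ is not automatic for white noise, because the It\^o/Stratonovich convention matters. I would instead work with the stationary distribution of the Fokker--Planck equation, which is Gibbs exactly when fluctuation and dissipation balance as in \cref{definition:thermal_equilibrium}, and fall back on the Gibbs computation.

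It must also be stated clearly that the theorem is only a sufficient condition for non-equilibrium. It holds under the harmonic approximation; anharmonic corrections from the discarded Taylor terms would themselves break $\langle T\rangle=\langle V\rangle$. Separately, $\langle\mathcal{L}\rangle=0$ does not certify equilibrium.
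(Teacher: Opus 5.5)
Your proposal is correct and is essentially the paper's argument: equipartition for a single harmonic mode forces $\langle T\rangle=\langle V\rangle$ at equilibrium, and the theorem is its contrapositive. One small correction: this needs the one-kinetic-term Hamiltonian of \cref{definition:equipartition}, not the two-body Hamiltonian of \cref{proposition:proposition_products}, which would give $\langle T\rangle=2\langle V\rangle$ even at equilibrium. The paper only adds illustrative departures (a cursor at rest near the edge of a large target, very fast flicks), whereas your Gaussian computation and virial cross-check are sound extras. In the virial route, $\langle x\eta\rangle=0$ holds with no It\^o/Stratonovich subtlety, because the noise enters only the velocity equation and $x$ is differentiable.
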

\begin{proof}
Refer to \cref{subsubsection:non_equilibrium_theorem} for the proof.
\end{proof}
The Lagrangian ($\mathcal{L}$) may also offer further modeling options through the Euler-Lagrange equations. In a deployed real-time system, due to noise and measurement error, this theorem may take the form $\langle T\rangle-\langle V\rangle > \epsilon$, where $\epsilon$ represents a threshold.

\subsection{Bayesian Decision-Making}
This section and the subsequent one are independent of the thermodynamic framework presented in this study and focus on the engineering aspects before evaluation. \cref{eq:raw_interaction_probability} gives a probability output weighted against other links. However, this output alone does not provide enough information to make a rational decision on whether an action should be performed. Rather than assigning a heuristic parameter such as "Execute action if $\mathbb{P}>0.5$", the decision making process should be weighed against its cost and benefit, and not just the probability of an interaction occurring. Therefore, we utilize the following formulation:
\begin{equation} \label{eq:bayesian_desktop}
\mathbb{E}[U] = (\omega_{gain} \cdot \mathbb{P}_i) - \omega_{cost}, \quad \omega_{gain}, \omega_{cost} \in [0, 1]
\end{equation}

This decision making framework helps determine the utility of an action that has a probability $\mathbb{P}_i$ of occurring after weighing the expected gain $\omega_{gain}$ and cost $\omega_{cost}$. It provides a simple parameter that suggests executing an action if there is any utility, i.e. $\mathbb{E}[U]>0$. We are going to utilize this decision making framework by applying it to prefetching tasks on a website for the evaluation of \cref{proposition:proposition_products}.

\subsection{Application To Cursor \& Touchscreen}
To test \cref{proposition:proposition_products} and apply the model to a practical use case of prefetching tasks on a website, it is first necessary to filter out the noise present in the velocity of the cursor and that of touchscreen scrolling. We adopt a Kalman filter \cite{kalmanNewApproachLinear1960} for this requirement. It is important to note that this filter is a design choice in the frontend and is independent of the core model. Another filter may be used as desired. The Kalman filter provides the variance in the velocity measurement, which can be substituted in calculating the kinetic components of \cref{proposition:proposition_products}.

\subsubsection{1D Mobile Scrolling}
To clean up the noisy mobile scroll
velocity, we employ a Discrete 1D Kalman filter. The configurations of the filter used during evaluation are given in the Appendix \cref{subsubsection:mobile_kalman}. Unlike desktop environments with explicit cursors, mobile touchscreens lack a visible tracking agent. Since we are trying to model without using motion detectors to capture the user's finger location, we define the agent on a mobile as a stationary point. Because mobile interaction is dominated by single-handed use \cite{DesignFingersTouch}, and the optimal ergonomic resting position is slightly below the center of the screen \cite{kimNaturalThumbZone2019}, the fixed agent position $\boldsymbol{\mu}_a$ can be defined as:
\begin{equation} \label{eq:hardcoded_position}
\boldsymbol{\mu}_a = \begin{bmatrix} 0.5W_{px} \\ 0.55H_{px} \end{bmatrix}
\end{equation}
where $W_{px}$ and $H_{px}$ represent the width and height of the touchscreen device, respectively. This exact coordinate selection is derived from natural thumb zone analysis and can be adjusted as per requirements. Because we treat the agent as a stationary anchor while targets move toward it via the user's scroll action, the agent possesses zero kinetic energy. Consequently, $\mathcal{K}_a=1$ in \cref{proposition:proposition_products} throughout the interaction.

Since the interaction is constrained to one dimensional vertical scrolling, we calculate the displacement vector $\mathbf{x}$ using exclusively the vertical distance $\mathbf{x}_{y}$ between the agent and the target. We set the horizontal displacement $\mathbf{x}_{x}=0$.

However, this causes horizontally adjacent targets to have the same distance from the agent. Under the normalized partition function, \cref{eq:raw_interaction_probability}, these targets occupying the same horizontal position (since $\mathbf{x}_{x}=0$) would mathematically dilute the normalized probability $\mathbb{P}$ among themselves. This probability dilution would subsequently penalize the expected utility in the Bayesian decision-making framework (\cref{eq:bayesian_desktop}). 

To correct for 1D, mobile evaluations strictly utilize the unnormalized microstate probability $\psi_i$. The Bayesian decision-making framework for mobile interaction is therefore adapted as
\begin{equation} \label{eq:bayesian_mobile}
\mathbb{E}[U] = (\omega_{gain} \cdot \psi_i) - \omega_{cost}, \quad \omega_{gain}, \omega_{cost} \in [0, 1]
\end{equation}

\subsubsection{2D Cursor Movement}
To clean up the noisy cursor velocity, we employ a Discrete 2D Kalman filter. The configurations of the filter used during evaluation are given in the Appendix \cref{subsubsection:destop_kalman}.

For cursor based interaction, the agent is represented by the cursor itself. The cursor will have a kinetic energy due to its velocity. If targets are moving, such as when the screen is scrolling, then the targets will also have kinetic energy. Therefore, \cref{proposition:proposition_products} and \cref{eq:raw_interaction_probability} apply without any special considerations. 

\section{Evaluation}
In the methodology, we made a hypothesis in the proof of \cref{lemma:lemma_parameterized_field} that target parameters independently influence the strength of the potential field. On the basis of this \textit{independence}, we discarded the coupling terms in the multivariate Taylor expansion. This evaluation section must test the independence of these parameters. Additionally, we must also test whether the color contrast of a button and the clarity of its label have an effect on the accuracy of a user attempting to click it, from \cref{corollary:paramaterised_potential}. Furthermore, we must test the efficiency of the target acquisition model described in \cref{proposition:proposition_products} against competitors. 

We recruited 30 participants (22 male, 8 female). Each participant took part in all three experiments. The participants had a mean age of 23.5 years (SD = 2.8, range:
19–32). All participants were daily users of mobile touchscreens and laptops. None of the participants reported having visual or motor impairments.

\subsubsection{Ethics Statement}
The study received approval from the relevant institutional ethics committee before data collection. Participants volunteered, provided informed consent, and could withdraw at any time. No direct personal identifiers were recorded.

\subsection{Evaluation of \cref{proposition:proposition_products}}
In order to simulate a real environment with network delays and to enable participants to run the model on their own devices, \cref{proposition:proposition_products} was programmed into an e-commerce clothing website and deployed on Google Cloud Run using Docker. The pages of the website were prefetched when the expected utility was greater than zero in \cref{eq:bayesian_desktop}. The weights of \cref{eq:bayesian_desktop} would ideally be determined using performance metrics from services such as Google Analytics. Pages that are viewed more often, would get prefetched more often and therefore should have higher $\omega_{gain}$ and lower $\omega_{cost}$. This decision making process helps avoid prefetching pages that statistically have a lower chance of being visited according to usage metrics, thereby saving on compute cost. Usage metrics were not available as the website was a testbed and did not have real-world users, so all the links on the website were assigned $\omega_{gain}=0.5$ and $\omega_{cost}=0.1$. Each participant was instructed to use the website naturally and browse the products as if they intended to buy them. The website interface was designed to look professional and was built on NextJS for the single page app experience. Images of the website can be found in the appendix \cref{fig:experimental_setup_websites}. Additional optimization techniques were implemented to fully simulate a real production e-commerce website experience, such as
\begin{enumerate}
    \item CDNs for image delivery
    
    \item Lazy loading parts of the page as the user scrolled down

    \item Backend API caching using redis
    
    \item Page swap loaders. Although in most cases this became redundant as the high prefetch hit rate eliminated loading delays
\end{enumerate}
Auto-scaling and load balancing were implemented to ensure that there was no server lag as a result of multiple participants doing the experiments at the same time. To test \cref{proposition:proposition_products} in the worst case scenario, the website was designed to have links in close proximity to each other, since most competing models struggle in cluttered interfaces. The website had 110 uploaded products and browsable categories for Men, Women, Girls, Boys, and Babies. To test the prediction accuracy of \cref{proposition:proposition_products} against other heuristics, we chose two baselines, ForesightJS and prefetch all visible links. Foresight functions as a spatial and kinematic heuristic baseline which projects the cursor trajectory forward to estimate the landing position, while fetch all visible links provides a standard maximum cost baseline.

\textbf{\textit{Experiment 1: Cursor Prefetching.}} Each participant used the website on their own laptops, which included both Windows and Mac operating systems. In order to begin the experiment, the participants pressed the space key on their keyboards and then used the website naturally. The website prefetched links whenever $\mathbb{E}[U]>0$ from \cref{eq:bayesian_desktop}. Both ForesightJS and a fetch all visible links baseline were also running in the background. When the participants had decided that their browsing session was complete, they pressed the space key again to export a CSV file that contained the usage metrics. All participants used their laptop touchpads to interact with the website. 
    
\textbf{\textit{Experiment 2: Mobile Prefetching.}} Each participant used the website on their own mobile phones, which included both Android and iOS operating systems. As soon as the website opened, they tapped a button on the website to begin the experiment and had to press the same button again to export the usage metrics as a CSV file. The website prefetched links whenever $\mathbb{E}[U]>0$ from \cref{eq:bayesian_mobile}. Since ForesightJS defaults to fetching visible links on a mobile screen, this experiment only had the fetch all visible links as a baseline comparison.

The screen resolutions of the participants' mobile devices and laptops varied, but \cref{proposition:proposition_products} adapted without any special consideration or modifications. The exported file included usage metrics such as the 
\begin{enumerate}
    \item Normalized probabilities of the top five links at the moment of the click.
    \item The unnormalized probabilities of the targets preceeding the 1.5 seconds (segmented into 30 ms time bins) before the click.
    \item If the clicked link was prefetched.
    \item What prefetched links were not clicked.
\end{enumerate}
Along with the metrics for \cref{proposition:proposition_products}, these metrics were also reported for ForesightJS and the fetch all visible links baseline.

\subsection{Evaluation of \cref{lemma:lemma_parameterized_field} and \cref{corollary:paramaterised_potential}}
\textit{\textbf{Experiment 3:}} Tests the distribution of endpoints as a result of target properties and whether these parameters are statistically independent. To perform the test in a controlled environment, we created a simple one-page website and participant testing was conducted sequentially on localhost using the evaluator's laptop, a 14-inch MacBook M3 Pro. A mouse, Razer Viper Mini configured to 8000Hz, was connected to the MacBook for the participants to use. Pictures of this testbed website are in the appendix \cref{fig:experimental_setup_exp3}. To begin the experiment, participants had to click a start button on the website. The website had a rectangular area with a circular green point in the center. At any given moment, there would only be one target button in the rectangular area and the participant would have to click it. Upon clicking the target button, the user had to reset their mouse position by moving their cursor to the center of the rectangular area and clicking the green circle. Upon clicking the green circle, a new target appeared at a random angle with a fixed center-to-center distance of 280px. We employed a $2 \times 2$ within-subjects experimental design. There were 4 categories of buttons

\begin{enumerate}
    \item High Contrast Button with Clear Label
    
    \item High Contrast Button with Ambiguous Label

    \item Low Contrast Button with Clear Label

    \item Low Contrast Button with Ambiguous Label
\end{enumerate}
The experiment was carried out in 4 phases. Each phase had one of these four categories of the buttons. The user had to click 20 buttons from each phase in order to progress to the next phase. With 4 phases, 20 buttons, and 30 participants, there was a total of 2400 recorded data points. The exported files contained the standard deviation $\sigma_{spatial}$ of the endpoints, for each button category. For the clear label category, the text "Click" was present inside the buttons. For the ambiguous label, it was "\#@?.5!(-". \cref{fig:experimental_setup_exp3} in the appendix shows the picture of the testbed website and the different button categories used in this experiment.

\section{Results}
For the three experiments which were conducted, four graphs were plotted. For Experiment 1 and 2, \cref{fig:timeline_competition} and \cref{fig:efficiency} illustrate the results. Conversely, \cref{fig:sigma_bars} and \cref{fig:interaction} plot the results for Experiment 3. 

\subsection{Results of Experiment 1 and 2}
\subsubsection{Probability of Interaction}
The left subplot in \cref{fig:timeline_competition} represents the unnormalized probability that increased over the 1.5 seconds leading up to the link being clicked. The cursor interaction curve peaked at around 0.85 probability at the moment the click took place, while the mobile touch screen interaction curve peaked at almost 0.7 probability. The mobile touch screen curve started off at a higher probability of roughly 0.25 at the 1.5 second mark. The target links on mobile triggered prefetches at a median lead time of 881 ms prior to the click, compared to 307 ms for desktop cursor interactions. 

The center subplot in \cref{fig:timeline_competition} illustrates the average normalized probability in a desktop environment of the 5 links with the highest probabilities, at the instant of user click. The normalized probability of the clicked link was 24.6 times higher than the second ranked link. 

The right subplot details the unnormalized probabilities on a mobile interface of the top five predicted links, at the moment of the click. The engine successfully ranked the target link highest, though its probability was only 1.1 times greater than that of the second-ranked link.

\subsubsection{Trade-off and Competitor Analysis}
The left subplot in \cref{fig:efficiency} illustrates the desktop trade-off. Each browsing session of a single participant is represented by the 3 dots connected with dotted lines. Values further along the x-axis indicate a less efficient fetch:click ratio. The proposed model achieved a fetch:click ratio of 1.37, while ForesightJS achieved a fetch:click ratio of 2.00, and Viewport All achieved 3.45. The fetch:click ratio was calculated by defining a ratio of (total prefetches):(total clicks). ForesightJS had a 46\% $[(2.00-1.37)\div1.37]$ higher fetch:click ratio than the proposed model, while Viewport All had a 152\% $[(3.45-1.37)\div1.37]$ higher fetch:click ratio. Perfectly horizontal lines represent a browsing session in which all the clicked links were prefetched prior to the click event by all the models. Diagonal lines represent the scenario in which the clicked link was not prefetched. The proposed model occasionally failed to prefetch the target, as indicated by diagonal lines, yielding a final accuracy of 98.1\%. Meanwhile, ForesightJS and Viewport All achieved an accuracy of 100\% with all clicked links being prefetched. 

The right subplot in \cref{fig:efficiency} illustrates the mobile trade-off. The proposed model achieved a fetch:click ratio of 1.75 while maintaining an accuracy of 98.0\%. By comparison, the Viewport All achieved an accuracy of 100\% but a fetch:click ratio of 2.82. Viewport All had a 61\% $[(2.82-1.75)\div1.75]$ higher fetch:click ratio than the proposed model.

\begin{figure*}[htbp]
    \centering
    \includegraphics[width=\linewidth]{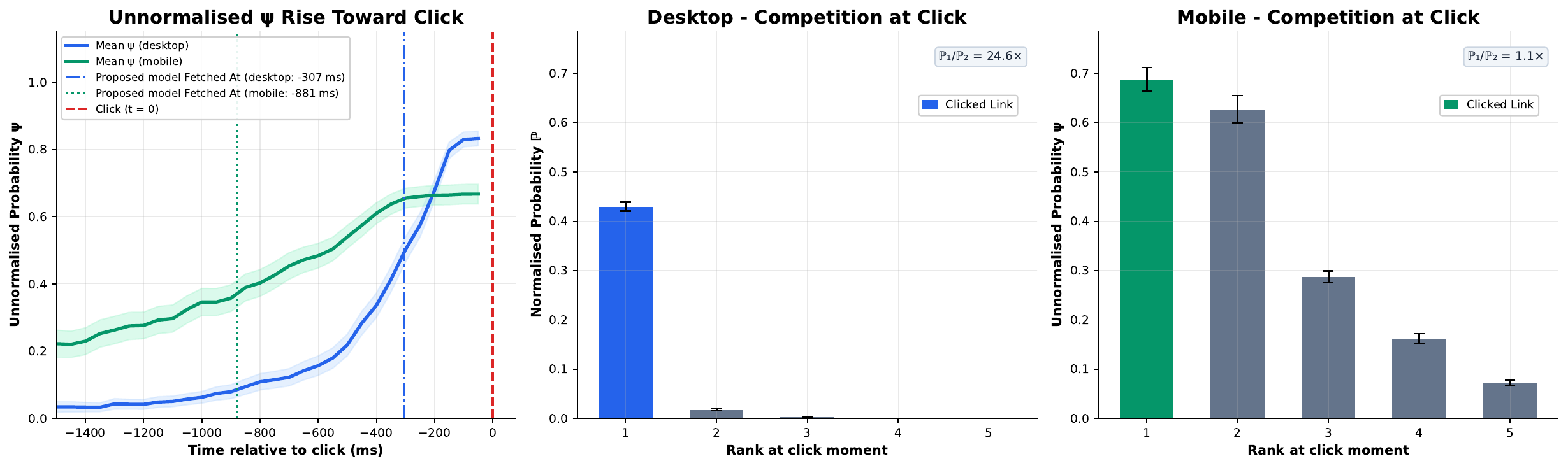}
    \caption{Left: Unnormalized probability rises prior to the click, triggering prefetches with significant lead time. Error bars represent 95\% Confidence Intervals. Center: normalized desktop probabilities. Right: unnormalized mobile probabilities. In both cases, the clicked link is ranked highest.}
    \label{fig:timeline_competition}
\end{figure*}

\begin{figure*}[t]
    \centering
    \includegraphics[width=\linewidth]{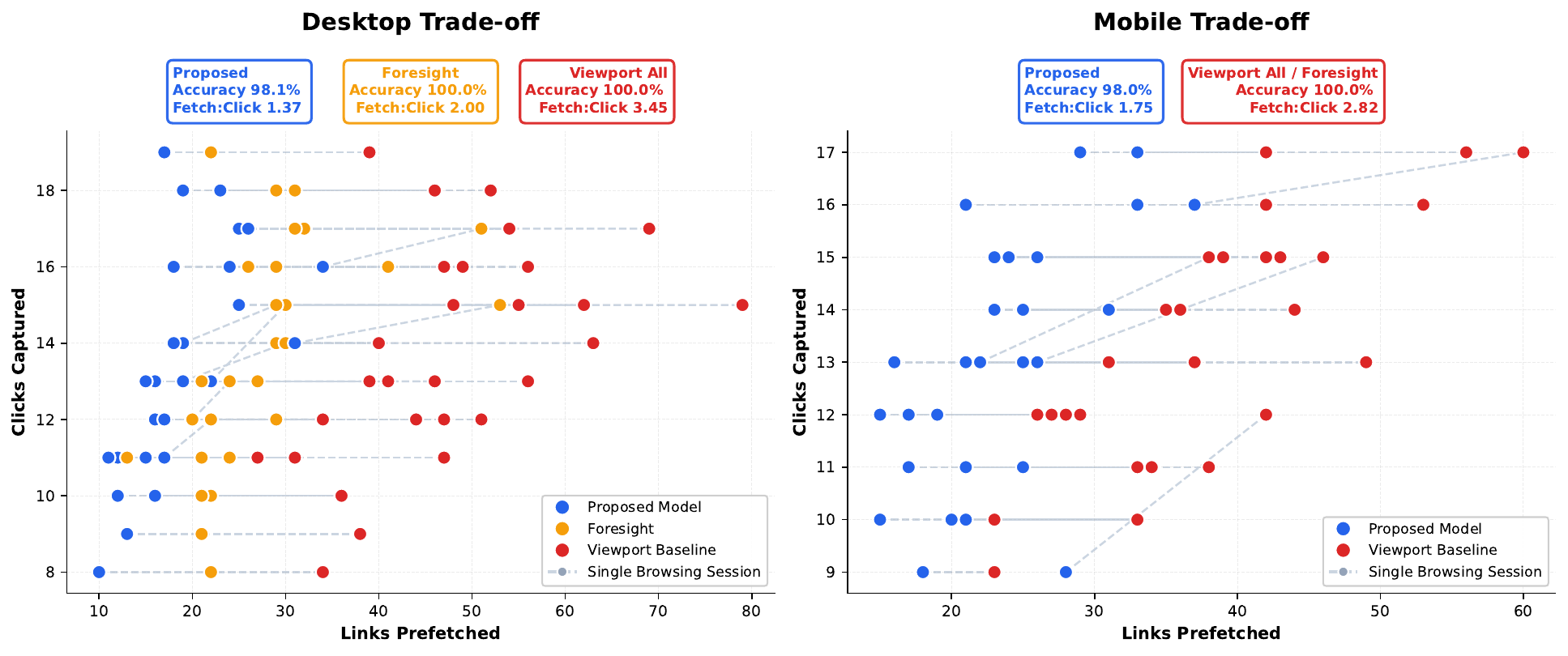}
    \caption{Prefetch strategy trade-off. The Proposed Model captures nearly all intended clicks with 98\% accuracy while drastically reducing wasted bandwidth (Fetch:Click ratio) compared to baselines.}
    \label{fig:efficiency}
\end{figure*}

\subsection{Results of Experiment 3}
To ensure strict experimental control over the 280px distance, the standard deviation of endpoints was analyzed exclusively using trials with exactly a 280px starting distance. Because repeated-measures ANOVAs require a perfectly balanced design, four participants who lacked at least one valid 280px trial across all four button categories were excluded, resulting in a final sample of $N = 26$ participants. As shown in \cref{fig:sigma_bars}, the high-contrast button with a clear label resulted in the lowest endpoint standard deviation ($\sigma_{spatial} \approx 10.5$px). Conversely, the low-contrast button with an ambiguous label produced the highest standard deviation ($\sigma_{spatial} \approx 13.0$px). A two-way repeated-measures ANOVA revealed a significant main effect of contrast on endpoint standard deviation ($F(1, 25) = 7.55, p = .011$), as well as a statistically significant main effect of label clarity ($F(1, 25) = 35.62, p < .001$). The interaction plot (\cref{fig:interaction}, right) indicates there was no significant interaction effect between contrast and label clarity ($F(1, 25) = 0.01, p = .936$), suggesting that these parameters may operate independently.

\begin{figure}[t]
    \centering
    \includegraphics[width=0.9\columnwidth]{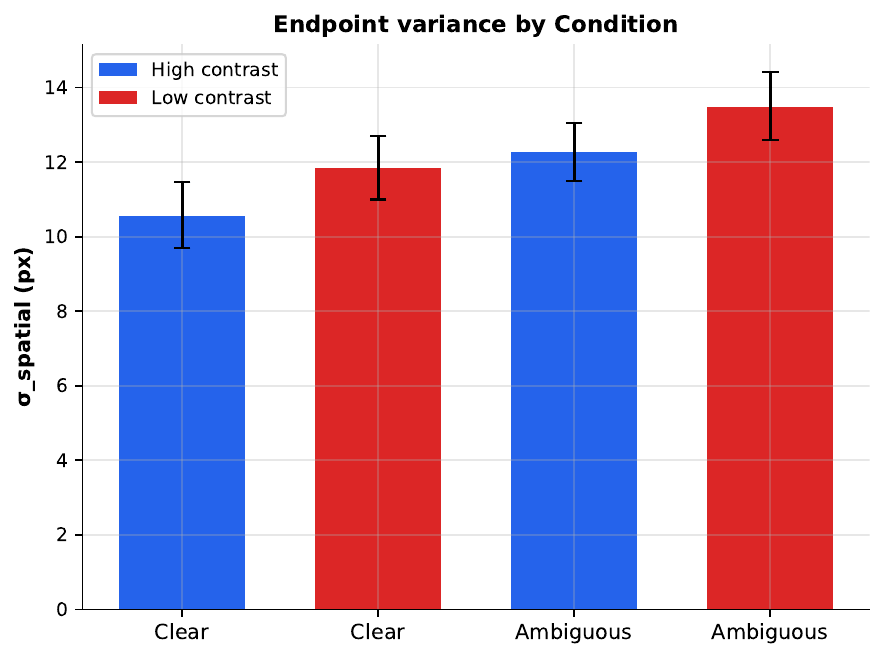}
    \caption{Overall endpoint standard deviation ($\sigma_{spatial}$) grouped by button condition. Higher contrast and clearer labels generally resulted in tighter endpoint distributions.}
    \label{fig:sigma_bars}
\end{figure}

\begin{figure*}[t]
    \centering
    \includegraphics[width=\linewidth]{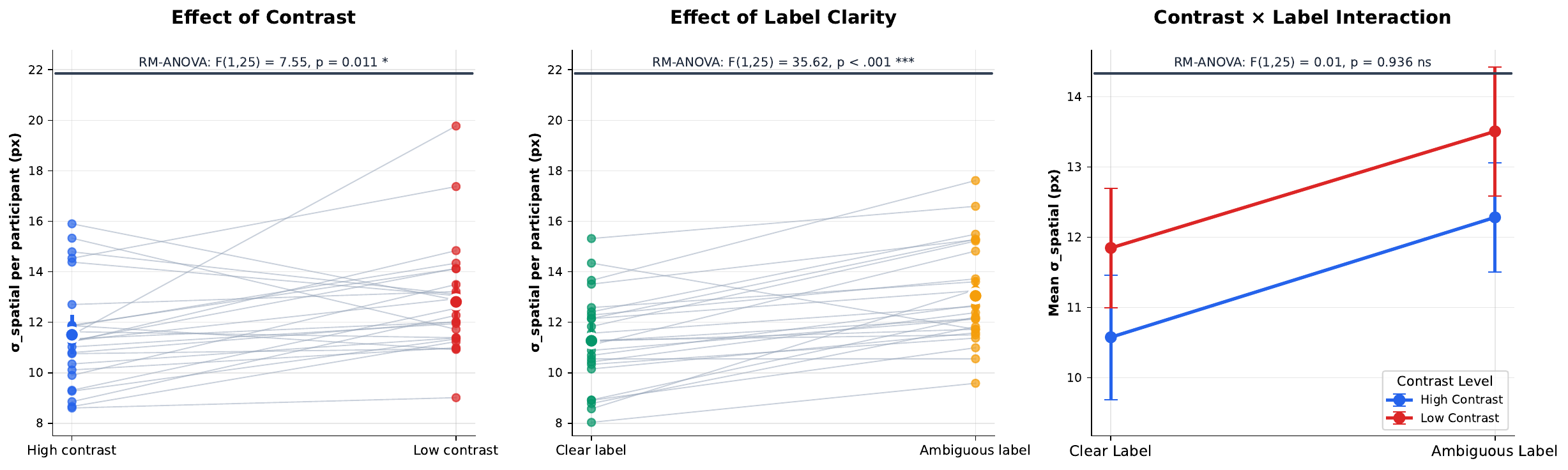}
    \caption{Within-subject effects of button parameters on endpoint standard deviation. Left \& Center: High contrast and clear labels both reduce standard deviation of endpoints. Right: The interaction plot demonstrates contrast and label clarity have no interaction with each other and may independently influence the accuracy of the user.}
    \label{fig:interaction}
\end{figure*}

\section{Discussion}
For cursor based evaluation, ForesightJS had a 46\% higher fetch:click ratio than the proposed model, while Viewport All had a 152\% higher ratio. On mobile, Viewport All had a 61\% higher fetch:click ratio. These metrics demonstrate that the proposed model is substantially more efficient than purely kinematic heuristics while maintaining the same level of computational efficiency, requiring just multiplication to compute. Despite the limitations discussed below, the results provide consistent empirical support for the theory across desktop, mobile, and controlled pointing experiments. Further studies with broader samples and interfaces can establish how widely the thermodynamic framework generalizes.

\subsection{Threats to Validity}
The study had 30 participants with a limited age range. The participants also used different devices in Experiments 1 and 2. Learning or fatigue may have affected the results. The utility weights from \cref{eq:bayesian_desktop,eq:bayesian_mobile} were fixed and other values were not tested.

Four participants were excluded from the final ANOVA shown in \cref{fig:interaction} because they did not have valid trials in all four conditions of the $2 \times 2$ within-subjects design. The evaluation used one website and one controlled pointing task.

\subsection{Review of Results}
At 1.5 seconds prior to the click, mobile interactions exhibited a higher probability than cursor-based interactions, as illustrated by the left subplot in \cref{fig:timeline_competition}. This was due to users scrolling the screen and spending some time observing the newly appeared elements on the screen. Consequently, this caused the 1.5 seconds preceeding the click event to have a higher probability. For interaction using a cursor, the probability at the 1.5 seconds preceeding the click was lower as a result of the participants' cursors being sufficiently far from the center of the clicked link. The high probability at $t=0$ ms further justifies the harmonic potential having an equilibrium position on $\mathbf{x}=\mathbf{0}$, at roughly the center of the links, towards which the users aimed. The probability of cursor based interaction is not $1$ at $t=0$ ms because of the distribution of endpoints around the center of the links, since $\mathbf{x}=\mathbf{0}$ is the point with maximum probability. In the case of interaction on mobile, users were not aligning the center of the link with the hard-coded agent position from \cref{eq:hardcoded_position}, leading to a slightly lower probability at the $t=0$ ms mark. Since users aimed for centers on cursor based interaction, the normalized probability was heavily in favor of the clicked link, which explains the 24.6 times domination over the second ranked link in the middle subplot of \cref{fig:timeline_competition}. Since mobile based interaction used an unnormalized probability, the probability difference between the ranking of the links was much lower than on desktop.

The 98.1\% accuracy of the proposed model, as illustrated by the left subplot of \cref{fig:efficiency}, was due to users sometimes not clicking near the center of extremely large targets, such as banners. Since the potential component $\mathcal{V}$ has a low probability at the edges of large targets, this resulted in the failed prediction of the target. Similarly, for mobile based interaction, the inaccuracy is a result of users sometimes clicking links towards the edges of the screen, whose centers were far away from the hard-coded agent position, and in certain cases, outside the viewport. 

The parallel lines observed in the right subplot of \cref{fig:interaction} demonstrate that the target parameters had no interaction with each other. This provides some validation for the assumption in the proof of \cref{lemma:lemma_parameterized_field}, where the coupled terms were set to zero. In general, label clarity had a greater impact on user inaccuracy than color contrast. This suggests that in the process of making contact, decoding and understanding textual information makes the user more inaccurate than differentiating colors.

\subsection{Limitations of the Target Acquisition Model}
An important finding of \cref{fig:efficiency} is that the model starts to break down when faced with large targets. The testbed e-commerce website (\cref{fig:experimental_setup_websites}) had such large targets, resulting in 98.1\% accuracy instead of 100\%. For an agent at rest at the boundaries of a large target, there will be zero kinetic energy and non-zero potential energy, causing a non-zero averaged Lagrangian as per \cref{theorem:non_equilibrium_theorem}. For the proposed harmonic agent-target model, the non-zero averaged Lagrangian would explain why the Gibbs distribution, a principle of systems in thermal equilibrium, may struggle to predict user interaction for large targets. Large targets also allow selection over large distances from the center, and this causes another issue with the Taylor approximation with discarded higher order terms in the proof of \cref{proposition:proposition_products}.

\section{Future Work}
There are no mathematical restrictions that prevent \cref{proposition:proposition_products} from being applicable beyond the evaluated website and prefetching use case. It should be applicable to 3D use cases, including gaze, virtual reality, mid-air gestures, and brain-computer interfaces. However, this has yet to be fully evaluated. Additionally, the proposed framework provides options to model the different types of approach trajectories, such as critically damped and overdamped, which have yet to be formalized and applied to practical scenarios. 

Machine learning approaches can model learned behavior more accurately by incorporating theoretical constraints such as equilibrium, non-equilibrium dynamics, and distinct agent trajectories. Furthermore, models can be trained to  account for user experience across varying button attributes, including label clarity, color contrast, and border radius. 

As the experimental results show, there was no observed interaction between label clarity and color contrast. However, future work must attempt to formalize the ambiguity in what is counted as a target parameter and how the specified property is quantified. Further extensions to the theory could investigate nonlinear and time varying potentials.

\section{Conclusion}
This study introduced a theory grounded in thermodynamics for human-computer interaction. The framework models agent-target interaction by assigning kinetic energy from velocities and potential energy from distance. The framework views the entire interaction lifecycle to be in regimes of thermodynamic equilibrium and non-equilibrium, where Langevin dynamics provide the differential equations that describe the movement of an agent as it approaches the target, and the Gibbs distribution describes the state of thermal equilibrium which the system approaches. The theory successfully recovered Fitts' Law by suggesting that it is a special case of Langevin dynamics for an approach trajectory resembling that of an underdamped oscillator. Furthermore, the theory provides insight that suggests each property of the target, such as its color contrast, may independently influence the strength of the attractive force and impact the accuracy of the user. 

The target acquisition model, presented as a derived proposition from the framework, was evaluated across 30 participants. It achieved 98.1\% desktop target prediction accuracy with a fetch ratio of 1.37. For mobile, it achieved 98.0\% accuracy with a ratio of 1.75. Another study was conducted to evaluate the impact of button color contrast and the clarity of its label on user accuracy, which found significant effects of label clarity and color contrast on endpoint variability.

The target acquisition model had a weakness when dealing with large targets in the user study, as anticipated by the theory, due to the  discarded higher order Taylor expansion terms and the thermal equilibrium assumption. Future work should extend the framework across broader interfaces, movement regimes and further develop the mathematical formalism.

\bibliographystyle{ACM-Reference-Format}
\bibliography{references}

\section{Appendix}
\subsubsection{1D Discrete Kalman Filter} \label{subsubsection:mobile_kalman}
Filter configuration for mobile vertical scrolling that was used in the production website during the evaluation.
\begin{enumerate}
    \item \textit{State Vector \& Observation Matrix.} $x_t = \begin{bmatrix} y_t & v_t \end{bmatrix}^T$. Here, $y_t$ represents the vertical position and $v_t$ represents the velocity of the scroll. The Observation Matrix is set to $\mathbf{H} = \begin{bmatrix} 1 & 0 \end{bmatrix}$.
 
    \item \textit{Measurement Noise $\mathbf{R}$ \& Process Noise $\mathbf{Q}$.} 
\begin{equation*}
        \mathbf{R} = 100
\qquad
\mathbf{Q} = \begin{bmatrix} 1 & 0 \\ 0 & 0.1 \end{bmatrix}
    \end{equation*}
 
    \item \textit{State Error Covariance Matrix.} 
    $$\mathbf{P}_t = \begin{bmatrix}\sigma_y^2 & \sigma_{yv} \\ \sigma_{vy} & \sigma_v^2 \end{bmatrix},\quad\mathbf{P}_0 = \begin{bmatrix}10 & 0 \\ 0 & 10 \end{bmatrix}$$
    The velocity variance $\sigma_v^2$ is applicable within the kinematic exponents of \cref{eq:finalized_microstate_probability}
\end{enumerate}

\subsubsection{2D Discrete Kalman Filter} \label{subsubsection:destop_kalman}
Filter configuration for the cursor that was used in the production website during the evaluation.

\begin{enumerate}
    \item \textit{State Vector \& Observation Matrix.} $x_t =\begin{bmatrix} x_t & y_t & v_{xt} & v_{yt} \end{bmatrix}^T$. Here, ($x_t,y_t$) corresponds to the cursor position coordinates and $(v_{xt},v_{yt})$ their respective velocity components. The Observation Matrix is $$\mathbf{H} = 
    \begin{bmatrix} 1 & 0 & 0 & 0 \\ 0 & 1 & 0 & 0 \end{bmatrix}$$.
 
    \item \textit{Measurement Noise $\mathbf{R}$ \& Process Noise $\mathbf{Q}$.} 
    \begin{equation*}
        \mathbf{R} =
\begin{bmatrix}
300 & 0 \\
0 & 300
\end{bmatrix}
\qquad
\mathbf{Q} =
\begin{bmatrix}
0.1 & 0 & 0 & 0 \\
0 & 0.1 & 0 & 0 \\
0 & 0 & 0.1 & 0 \\
0 & 0 & 0 & 0.1
\end{bmatrix}
    \end{equation*}
 
    \item \textit{State Error Covariance Matrix.} \begin{equation*}
        \mathbf{P}_t = \begin{bmatrix}
        \sigma_x^2 & \sigma_{xy} & \sigma_{x v_x} & \sigma_{x v_y} \\
        \sigma_{yx} & \sigma_y^2 & \sigma_{y v_x} & \sigma_{y v_y} \\
        \sigma_{v_x x} & \sigma_{v_x y} & \sigma_{v_x}^2 & \sigma_{v_x v_y} \\
        \sigma_{v_y x} & \sigma_{v_y y} & \sigma_{v_y v_x} & \sigma_{v_y}^2
    \end{bmatrix} \quad \mathbf{P}_0 = \begin{bmatrix}
        100 & 0 & 0 & 0 \\
        0 & 100 & 0 & 0 \\
        0 & 0 & 100 & 0 \\
        0 & 0 & 0 & 100
    \end{bmatrix}
    \end{equation*} The velocity variance can be computed as 
    \begin{equation*}
        \sigma_v^2=\sigma_{v_x}^2 + \sigma_{v_y}^2
    \end{equation*}
\end{enumerate}

\subsection*{Proofs}
\subsubsection{Proof for \cref{lemma:lemma_parameterized_field}} \label{subsubsection:proof_for_lemma_parameterized_field}
\begin{proof}
Let $\mathbf{x}$ represent the displacement vector between the agent and the target. Let $L$ represent label clarity and let $C$ represent color contrast. The reference values are $\mathbf{x}=\mathbf{0}$, $L=0$ describing a button with zero clarity, and $C=0$ describing a button with zero contrast. Multivariate Taylor approximation of the potential field with respect to $\mathbf{x}$ about $\mathbf{x}=\mathbf{0}$ gives
\begin{equation}
\begin{aligned}
V(\mathbf{x};L,C)
&\approx
V(\mathbf{0};L,C)
+
\nabla V(\mathbf{0};L,C)^{\mathsf T}\mathbf{x}\\
&\quad+
\frac{1}{2}
\mathbf{x}^{\mathsf T}
\nabla^2V(\mathbf{0};L,C)
\mathbf{x}.
\end{aligned}
\end{equation}

From \cref{definition:mechanical_equilibrium}, at the target center,
\begin{equation}
V(\mathbf{0};L,C)=0\quad \text{and}
\quad
\nabla V(\mathbf{0};L,C)=\mathbf{0}.
\end{equation}

We assume that the potential field has the same curvature in every spatial direction. Therefore,
\begin{equation}
\nabla^2V(\mathbf{0};L,C)
=
k(\mathbf{0};L,C)\mathbf{I},
\end{equation}
where $\mathbf{I}$ is the identity matrix. The potential field is consequently
\begin{equation}
V(\mathbf{x};L,C)
\approx
\frac{1}{2}
k(\mathbf{0};L,C)
\|\mathbf{x}\|^2.
\end{equation}

Taylor approximation of $k(\mathbf{0};L,C)$ about $L=0$ and $C=0$ gives
\begin{equation}
\begin{aligned}
k(\mathbf{0};L,C)
&\approx
k(\mathbf{0};0,0)\\
&\quad+
\left(
\frac{\partial k(\mathbf{0};0,0)}{\partial L}L
+
\frac{\partial k(\mathbf{0};0,0)}{\partial C}C
\right)\\
&\quad+
\frac{1}{2}
\left(
\frac{\partial^2k(\mathbf{0};0,0)}{\partial L^2}L^2
+
2\frac{\partial^2k(\mathbf{0};0,0)}
{\partial L\partial C}LC
+
\frac{\partial^2k(\mathbf{0};0,0)}{\partial C^2}C^2
\right).
\end{aligned}
\end{equation}

We assume that the linear terms are zero
\begin{equation}
\frac{\partial k(\mathbf{0};0,0)}{\partial L}
=
\frac{\partial k(\mathbf{0};0,0)}{\partial C}
=
0.
\end{equation}
and also hypothesize that the effects of $L$ and $C$ are additive, so the coupling term is zero:
\begin{equation}
\frac{\partial^2k(\mathbf{0};0,0)}
{\partial L\partial C}
=
0.
\end{equation}
This hypothesis of nonexistent coupling is tested in the evaluation. Define
\begin{equation}
\begin{aligned}
k_x
&=
k(\mathbf{0};0,0),\\
k_L
&=
\frac{\partial^2k(\mathbf{0};0,0)}{\partial L^2},\\
k_C
&=
\frac{\partial^2k(\mathbf{0};0,0)}{\partial C^2}.
\end{aligned}
\end{equation}

The curvature is therefore approximated by
\begin{equation}
k(\mathbf{0};L,C)
\approx
k_x
+
\frac{1}{2}k_LL^2
+
\frac{1}{2}k_CC^2.
\end{equation}

Substituting this expression into the potential energy function gives
\begin{equation}
V(\mathbf{x};L,C)
\approx
\frac{1}{2}k_x\|\mathbf{x}\|^2
+
\frac{1}{4}k_L\|\mathbf{x}\|^2L^2
+
\frac{1}{4}k_C\|\mathbf{x}\|^2C^2.
\end{equation}

Repeating the same expansion for $n$ scalar parameters, while assuming that all linear and mixed terms are zero, gives
\begin{equation}
V(\mathbf{x};P_1,\ldots,P_n)
\approx
\frac{1}{2}k_x\|\mathbf{x}\|^2
+
\frac{1}{4}\|\mathbf{x}\|^2
\sum_{j=1}^{n}k_{P_j}P_j^2.
\end{equation}
\end{proof}

\subsubsection{Proof for \cref{theorem:velocity_spatial_variance}}
\label{subsubsection:proof_for_velocity_spatial_variance}

\begin{proof}
For the Hamiltonian \footnote{Note that we have used $K$ to denote kinetic energy in this proof rather than $T$. This proof reserves $T$ for temperature.}
\begin{equation}
\mathcal{H}=K+V,
\end{equation}
\cref{definition:equipartition} gives,
\begin{equation}
\langle K\rangle\propto T
\quad\text{and}\quad
\langle V\rangle\propto T.
\end{equation}

From \cref{eq:term_reduced}, the spatial variance also scales with temperature:
\begin{equation}
\sigma_{\mathrm{spatial}}^2\propto T.
\end{equation}

The average kinetic energy is
\begin{equation}
\begin{aligned}
\langle K\rangle
&=
\frac{1}{2}m
\left\langle\|\mathbf{v}\|^2\right\rangle, \\
\left\langle\|\mathbf{v}\|^2\right\rangle
&\propto
\langle K\rangle
\propto T.
\end{aligned}
\end{equation}

Since both spatial variance and the second velocity moment scale with temperature,
\begin{equation}
\sigma_{\mathrm{spatial}}^2
\propto
\left\langle\|\mathbf{v}\|^2\right\rangle.
\end{equation}
\end{proof}

\subsubsection{Proof for \cref{corollary:paramaterised_potential}} \label{subsubsection:proof_for_corollary_paramaterised_potential}
\begin{proof}
It follows from \cref{definition:equipartition} and
\cref{lemma:lemma_parameterized_field} that
\begin{equation}
\begin{aligned}
\left\langle V(\mathbf{x};L)\right\rangle
&=
\left\langle
\frac{1}{2}k(\mathbf{0};L)\|\mathbf{x}\|^2
\right\rangle
=
\frac{d}{2}k_BT,\\
\left\langle\|\mathbf{x}\|^2\right\rangle
&=
\frac{dk_BT}{k(\mathbf{0};L)}.
\end{aligned}
\end{equation}

Since we have approximated $V(\mathbf{x};L)$ around $\mathbf{x}=\mathbf{0}$, applying the identity
\begin{equation}
\left\langle\|\mathbf{x}\|^2\right\rangle
=
\left\|\langle\mathbf{x}\rangle\right\|^2+\sigma_x^2
\end{equation}
therefore gives
\begin{equation}
\begin{aligned}
\sigma_{spatial}^2
&=
\frac{dk_BT}
{k_x+\frac{1}{2}k_LL^2},\\
\sigma_{spatial}^2
&\propto
\frac{1}
{k_x+\frac{1}{2}k_LL^2}.
\end{aligned}
\end{equation}

In scenarios where the label of a button is the primary driver of intent, the parameter $L$ outweighs the baseline attraction, $(k_L)L^2 \gg k_x$.
$$\sigma_{spatial} \approx \frac{1}{\sqrt{\frac{1}{2} (k_L) L^2}} = \frac{1}{L} \sqrt{\frac{2}{k_L}},$$
$$\therefore \sigma_{spatial} \propto \frac{1}{L}$$
\end{proof}

\subsubsection{Proof for \cref{theorem:non_equilibrium_theorem}} \label{subsubsection:non_equilibrium_theorem}
\begin{proof}
From \cref{definition:equipartition} \footnote{Note that we have used $T$ to denote kinetic energy in this proof rather than $K$.}:
\begin{equation}
\begin{aligned}
\langle T\rangle &= \langle V\rangle \\
\langle T\rangle-\langle V\rangle &= 0.
\end{aligned}
\end{equation}

Consider a cursor at rest near the edge of a large target. Its kinetic energy is zero, while the model assigns non-zero potential energy because the cursor is at a distance from the geometric center, from \cref{definition:mechanical_equilibrium}. Therefore,
\begin{equation}
\langle V\rangle-\langle T\rangle>0.
\end{equation}

Conversely, during extremely rapid cursor movement toward a target, such as in a shooting game, the kinetic energy may exceed the potential energy:
\begin{equation}
\langle T\rangle-\langle V\rangle>0.
\end{equation}

These inequalities can be combined as
\begin{equation}
\left|\langle T\rangle-\langle V\rangle\right|>0.
\end{equation}

Since the Lagrangian is defined as
\begin{equation}
\mathcal{L}=T-V,
\end{equation}
the same condition can be expressed as
\begin{equation}
\langle\mathcal{L}\rangle
=
\langle T\rangle-\langle V\rangle
\ne 0.
\end{equation}

Therefore, a non-zero averaged Lagrangian indicates that the interaction no longer satisfies the equilibrium condition of the harmonic agent-target model.
\end{proof}

\subsection*{Reproducibility Statement}

The target acquisition model is open source and available as the \texttt{intent-link} npmjs package at \href{https://www.npmjs.com/package/intent-link}{npmjs.com/package/intent-link}.

\section*{Generative AI Disclosure} \label{sec:ai_usage}
Claude Code (Model: Sonnet) was used through Visual Studio Code for code completion during the development of Experiment~3. The experiment, including its conditions, procedure, measurements, and analysis, was designed by the authors. Claude Code assisted with implementing some of the user interface elements shown in \cref{fig:experimental_setup_exp3}. The generated code was reviewed, modified, and tested by the authors before the experiment was conducted. 

\cref{fig:teaser_potential_well,fig:microstate_visualization} were generated by ChatGPT (Model: GPT-5.6 Sol). To illustrate the arbitrary damped harmonic oscillator in \cref{fig:damped_oscillator}, Gemini (Model: Pro) was used to write LaTeX code for the graph.

Generative AI was not used to generate experimental data, statistical results, other figures, interpretations, or conclusions. No other part of the project used Generative AI.

\begin{figure*}[t!]
    \centering
    \includegraphics[width=0.32\linewidth]{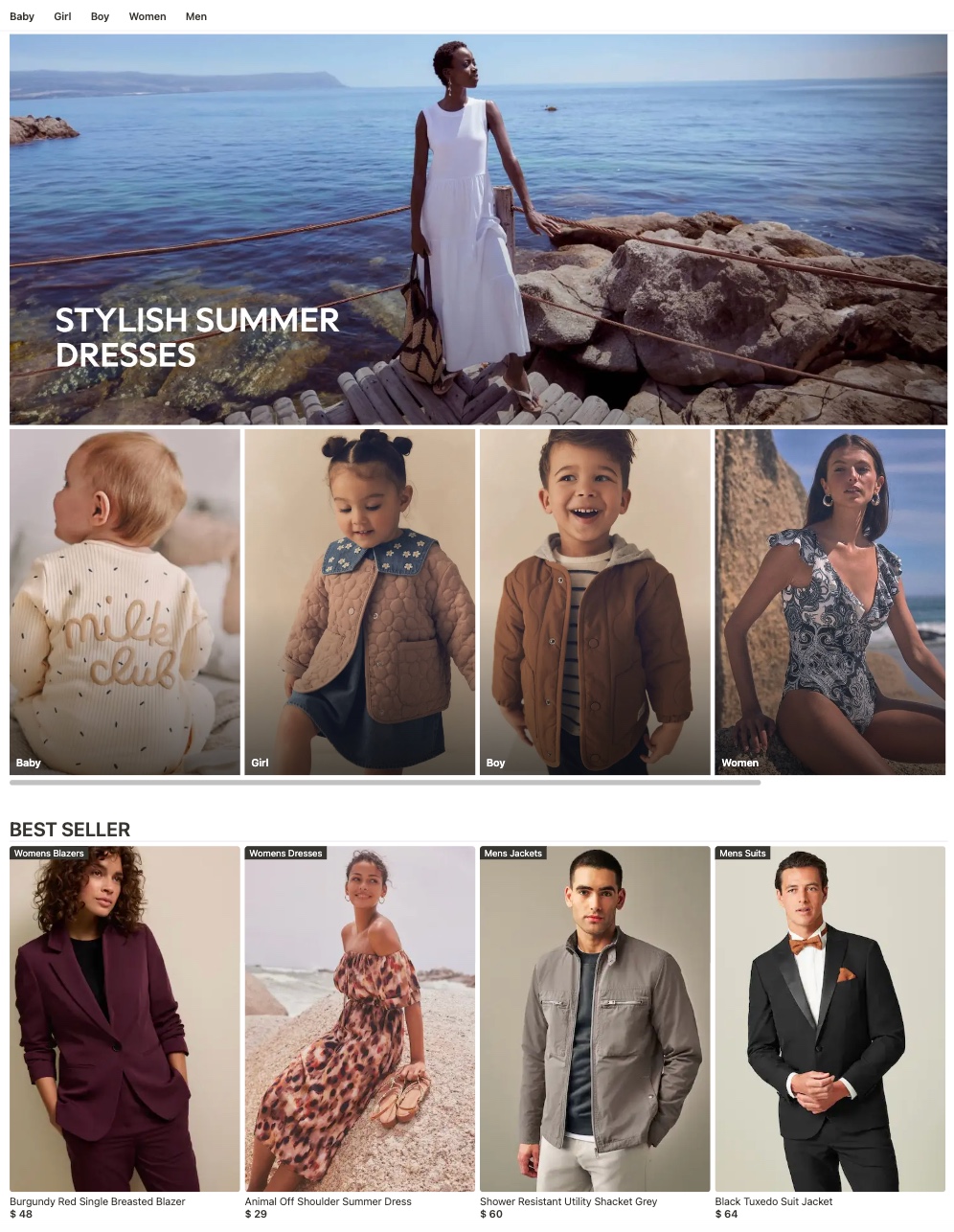}\hfill
    \includegraphics[width=0.32\linewidth]{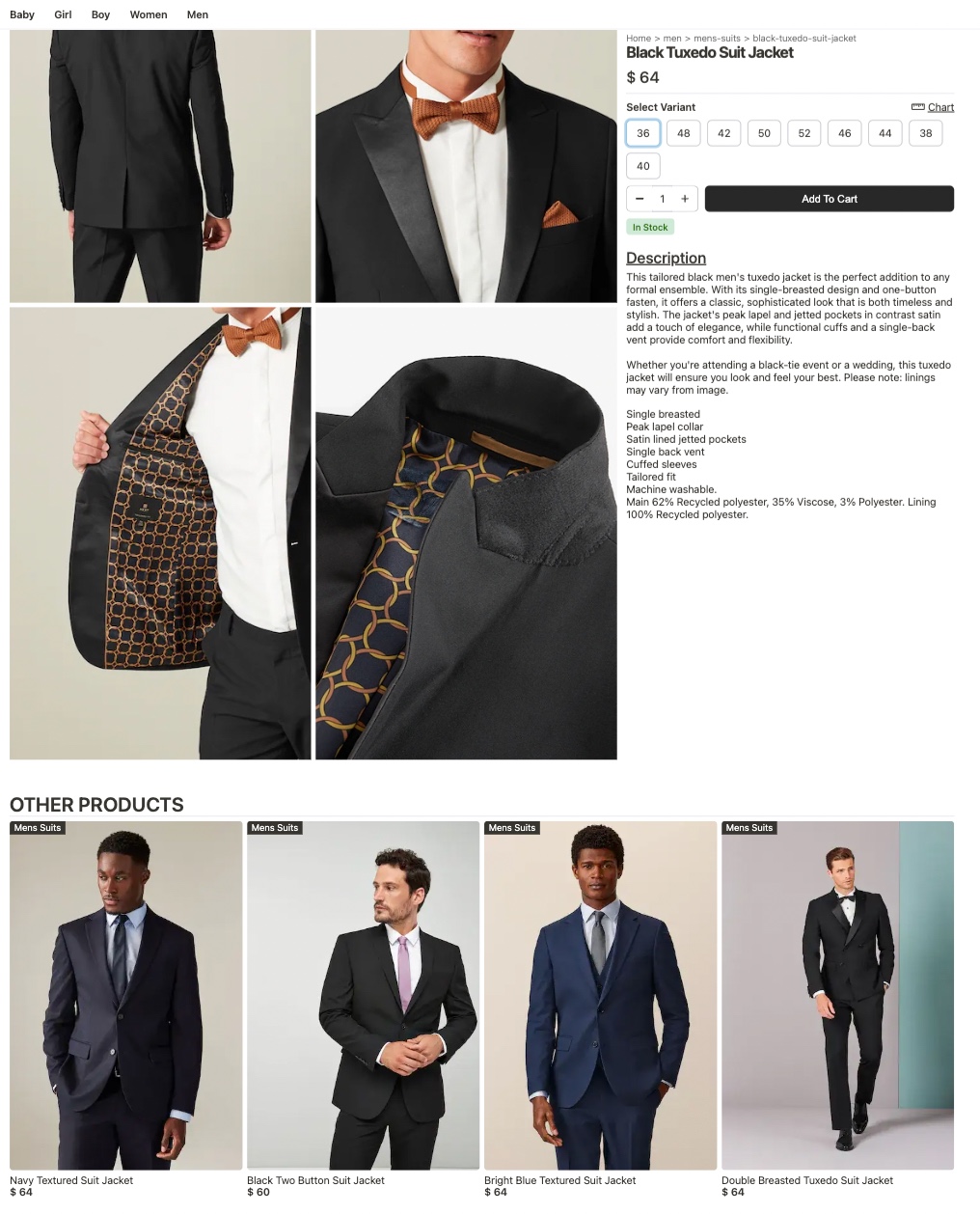}\hfill
    \includegraphics[width=0.32\linewidth]{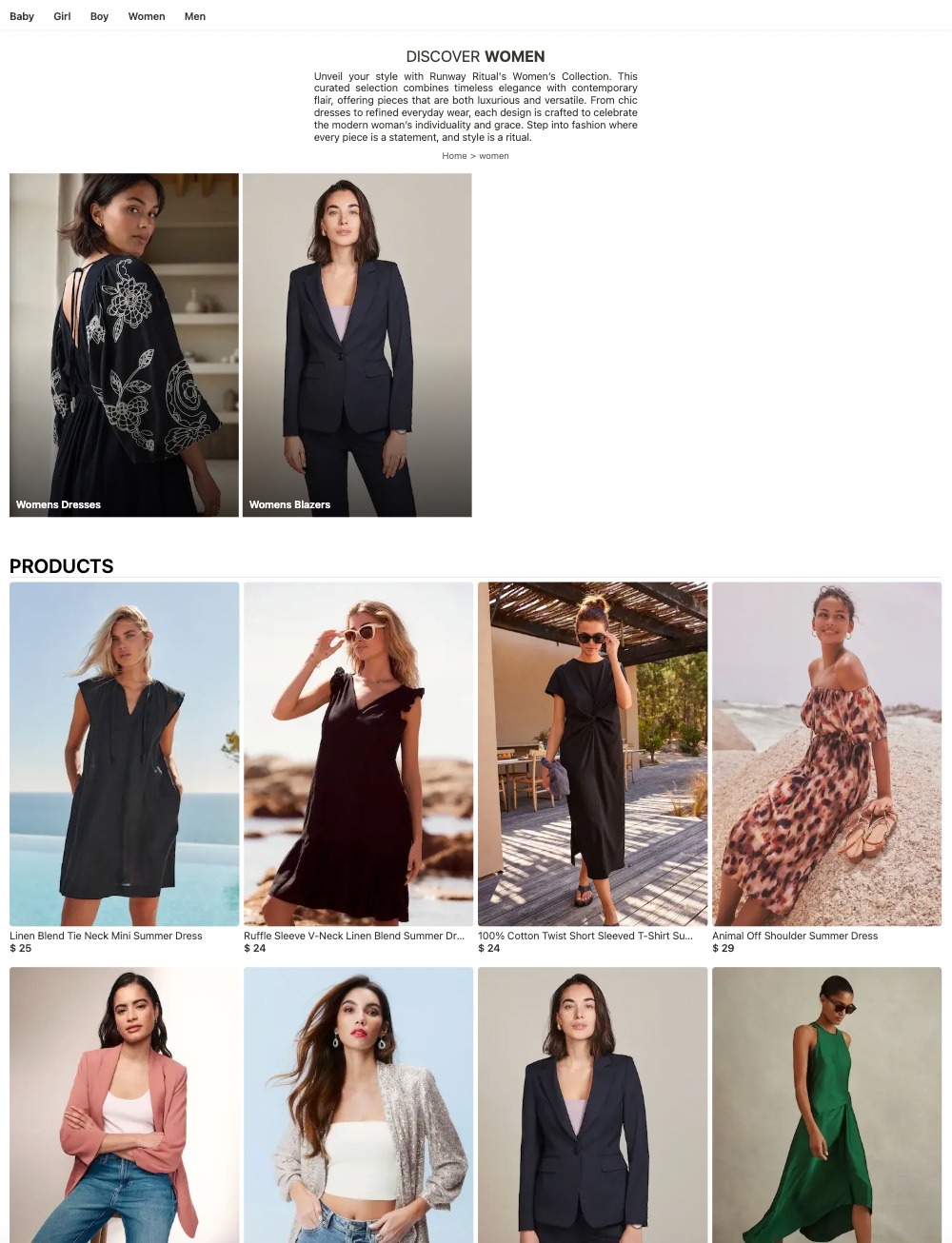}
    
    \caption{The custom e-commerce environment for Experiment 1 and 2. From left to right: The dense homepage, an individual product page, and a category browsing page.}
    \label{fig:experimental_setup_websites}
\end{figure*}

\begin{figure*}[t!]
    \centering
    \includegraphics[width=0.91\linewidth]{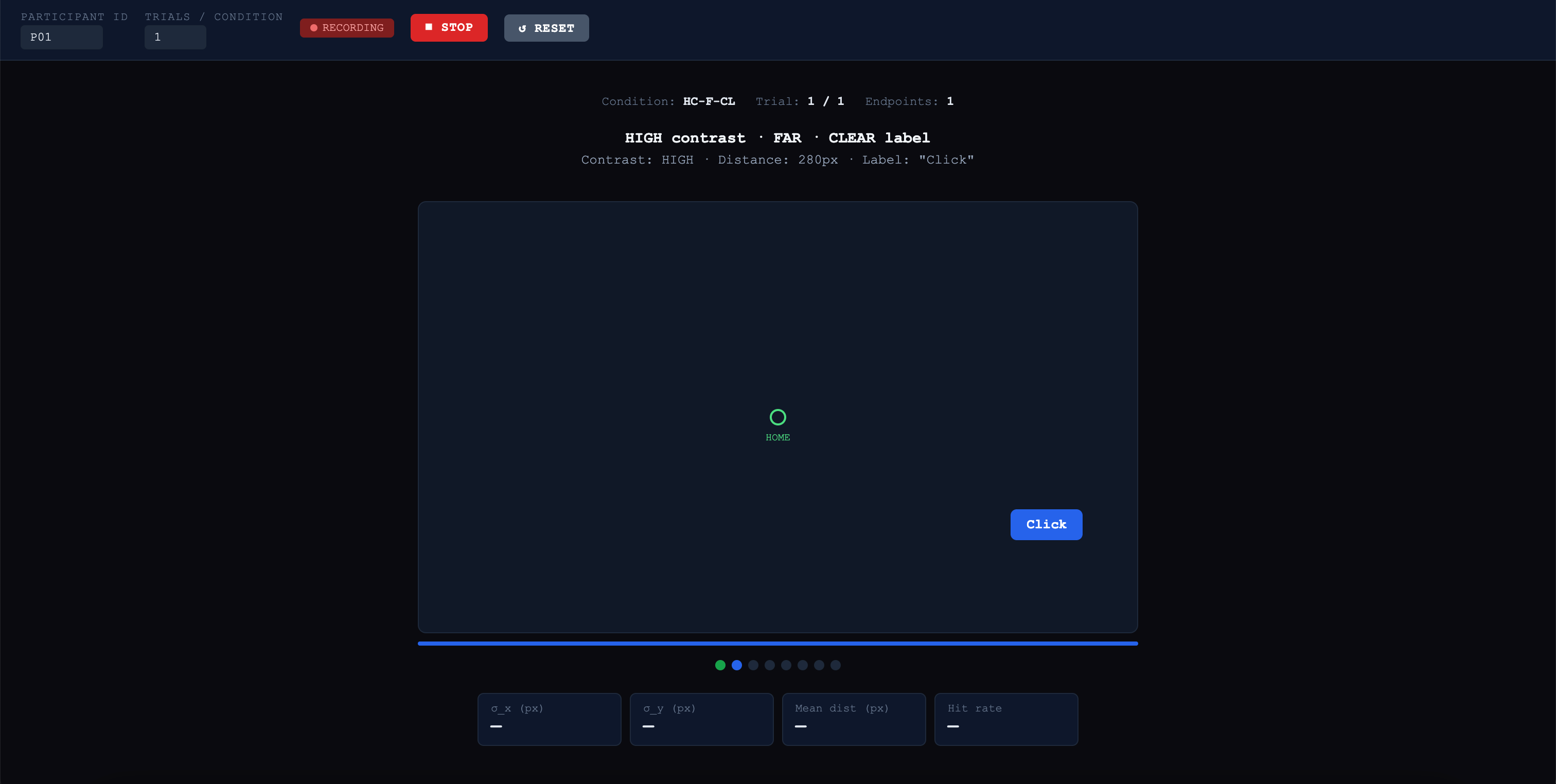}
    
    \vspace{0.1cm}
    
    \includegraphics[width=0.22\linewidth]{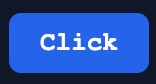}\hspace{0.1cm}
    \includegraphics[width=0.22\linewidth]{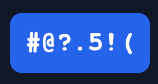}\hspace{0.1cm}
    \includegraphics[width=0.22\linewidth]{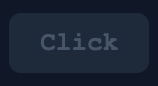}\hspace{0.1cm}
    \includegraphics[width=0.22\linewidth]{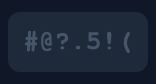}
    
    \caption{The controlled testing environment for Experiment 3. \textbf{Top:} The testing interface, which enforces a strict 280px target distance and records the standard deviation of the distribution of endpoints. \textbf{Bottom (Left to Right):} The 2$\times$2 target conditions: High Contrast/Clear Label, High Contrast/Ambiguous Label, Low Contrast/Clear Label, and Low Contrast/Ambiguous Label.}
    \label{fig:experimental_setup_exp3}
\end{figure*}

\end{document}